\documentclass{article}


    \usepackage[final,nonatbib]{neurips_2024}

\usepackage[utf8]{inputenc} 
\usepackage[T1]{fontenc}    
\usepackage{hyperref}       
\usepackage{url}            
\usepackage{booktabs}       
\usepackage{amsfonts}       
\usepackage{nicefrac}       
\usepackage{bbm}
\usepackage{microtype}      
\usepackage{amsmath}
\usepackage{graphicx}
\usepackage{array}
\usepackage{algorithm} 
\usepackage[noend]{algpseudocode}
\usepackage{multirow}
\usepackage{array}
\usepackage{tabularx}
\usepackage{amsfonts}
\usepackage{bm}
\usepackage{soul}
\usepackage{amsthm}
\usepackage{verbatimbox}

\usepackage{listings}
    \lstset{basicstyle=\ttfamily}
\usepackage[table,xcdraw]{xcolor}
\usepackage{subcaption}

\title{A Decision-Language Model (DLM) for Dynamic Restless Multi-Armed Bandit Tasks in Public Health}

\author{%
  Nikhil Behari \thanks{Equal contribution.} \\
  MIT, Harvard University
  \And 
  Edwin Zhang$^{*}$ \\
  Harvard University
  \And 
  Yunfan Zhao \\
  GE Healthcare, Harvard University
  \AND 
  Aparna Taneja \\
  Google
  \And 
  Dheeraj Nagaraj \\
  Google 
  \And 
  Milind Tambe \\
  Harvard University, Google
}

\definecolor{inputcontext}{rgb}{0.98,0.929,0.9294117647058824}
\definecolor{llmgen}{rgb}{0.98,0.976, 0.89}
\definecolor{sim}{rgb}{0.894,0.9568,1}
\definecolor{policycomp}{rgb}{0.980,0.941,0.894}
\definecolor{comment}{rgb}{0.886, 0.968, 0.866}

\definecolor{easy}{HTML}{EAF5DF} 
\definecolor{medium}{HTML}{FEE7C1} 
\definecolor{hard}{HTML}{FEDCCB} 
\definecolor{random_color}{HTML}{ff0000} 
\definecolor{base_color}{HTML}{f7e0bb}
\definecolor{noaction_color}{HTML}{dacae8} 
\definecolor{default_color}{HTML}{eba794} 
\definecolor{dlmnoreflect_color}{HTML}{757aa3} 
\definecolor{dlmreflect_color}{HTML}{a0c5b3}  
\definecolor{feats_color}{HTML}{3F98FF}
\newtheorem{definition}{Definition}

\newcommand{\randommodel}[1]{\setulcolor{random_color}\ul{#1}}
\newcommand{\basemodel}[1]{\sethlcolor{base_color}\hl{#1}}
\newcommand{\noactionmodel}[1]{\sethlcolor{noaction_color}\hl{#1}}
\newcommand{\defaultmodel}[1]{\sethlcolor{default_color}\hl{#1}}
\newcommand{\dlmnoreflectmodel}[1]{\sethlcolor{dlmnoreflect_color}\hl{#1}}
\newcommand{\dlmreflectmodel}[1]{\sethlcolor{dlmreflect_color}\hl{#1}}

\newcommand{\papername}[0]{DLM}
\newcommand{\note}[1]{\textcolor{blue}{}}
\setlength{\belowcaptionskip}{-10pt}

\newtheorem{proposition}{Proposition}


\begin{document}

\maketitle

\begin{abstract}
Restless multi-armed bandits (RMAB) have demonstrated success in optimizing resource allocation for large beneficiary populations in public health settings. Unfortunately, RMAB models lack flexibility to adapt to evolving public health policy priorities. Concurrently, Large Language Models (LLMs) have emerged as adept automated planners across domains of robotic control and navigation. In this paper, we propose a Decision Language Model (DLM) for RMABs, enabling dynamic fine-tuning of RMAB policies in public health settings using human-language commands. We propose using LLMs as automated planners to (1) interpret human policy preference prompts, (2) propose reward functions as code for a multi-agent RMAB environment, and (3) iterate on the generated reward functions using feedback from grounded RMAB simulations. We illustrate the application of DLM in collaboration with ARMMAN, an India-based non-profit promoting preventative care for pregnant mothers, that currently relies on RMAB policies to optimally allocate health worker calls to low-resource populations.  We conduct a technology demonstration in simulation using the Gemini Pro model \cite{team2023gemini}, showing DLM can dynamically shape policy outcomes using only human prompts as input. 
\end{abstract}

\vspace{-10pt}
\section{Introduction}
\vspace{-5pt}

Limited resource allocation is a frequent challenge in public health settings. For instance, in the maternal and child health domain, preventative care awareness programs play a key role in reducing maternal mortality, where global rates are currently more than double the UN Sustainable Development Goal target of fewer than 70 deaths per 100K live births \cite{Agenda2023}. These essential programs help avoid preventable deaths \cite{world2015strategies,choudhury2021mobile}, yet are often operated by non-profits supporting low-resource communities \cite{helpmum,armman-mhealth}, and thereby face \textit{two key challenges} in distributing resources to large beneficiary populations. First, programs typically operate with insufficient financial and human resources, necessitating effective resource allocation strategies to maximize health outcomes \cite{kruk2018high,baltussen2006priority}. Second, these programs must frequently adapt to changing population needs and intervention priorities \cite{world2015strategies}. Adaptability is especially important for prioritizing care for known subpopulations with higher risk \cite{Grler2022InformalCI}, and for adjusting strategies using community or expert-driven insights \cite{Deardorff2018StrategiesTI}. 

To address the first problem, restless multi-armed bandits (RMAB) have proven effective in real-world deployment for optimally allocating limited resources in critical public health domains \cite{ayer2019prioritizing,nishtala2021selective,deo2013improving,verma2023expanding}. In the classical RMAB formulation, a central planner chooses arms to allocate resources, observing a per-arm state-dependent reward. However, while RMABs are well-suited for resource allocation, a significant challenge persists in developing models capable of adapting to changing policy objectives. For instance: public health experts must often dynamically allocate resources to specific underprivileged socioeconomic groups  \cite{nelson2016disparities,jin2019communicating,phuong2023telehealth}, yet existing works in RMABs largely focus on fixed objectives, requiring significant human design to achieve new desired health outcomes. 

Recently, large language models (LLMs) have emerged as adept planners in tasks such as navigation \cite{shah2023lm}, spatio-temporal reasoning \cite{vemprala2023chatgpt} and interactive decision-making \cite{li2022pre}. Recent works have also demonstrated that LLMs, from language prompts, can generate reward functions as code, automating complex robotic manipulation tasks in reinforcement learning (RL) \cite{ma2023eureka}. While there is growing research in LLMs for healthcare \cite{sallam2023chatgpt}, the potential of LLMs to dynamically adapt resource allocation using language prompts---potentially enabling automated policy tuning using expert and community health insights \cite{naik2023nlp}---remains unstudied. 

In this work, we propose a Decision-Language Model (DLM) for RMABs, enabling dynamic fine-tuning of resource allocation policies for public health using human language prompts. We propose: 1) using LLMs to disambiguate language-expressed policy preferences, 2) using LLMs to directly propose reward functions as code to reach desired RMAB objectives, and 3) a fully automated self-reflection stage to iteratively refine LLM-generated reward functions using grounded RMAB simulations, without requiring ground truth feedback. Stepping beyond existing work in LLMs for health, we uniquely propose using LLMs to tune \textit{RMAB-driven resource allocation} through reward function design, enabling: 1) more principled resource allocation using RMABs as a central planner, rather than direct LLM output, 2) continual alignment of reward functions to specified prompts using simulated RMAB allocation outcomes to refine LLM proposals, and 3) improved interpretability of proposed rewards through the use of code for reward function output. 

We assess our proposed method within a \textit{simulated} public health environment created in partnership with ARMMAN, an India-based non-profit that spreads preventative care awareness to pregnant women and new mothers through an automated call service. To increase program listenership, ARMMAN employs health workers to provide live service calls to beneficiaries; however, due to limited support staff, ARMMAN faces a resource allocation problem to improve program listenership through optimal assignment of health workers. Prior works modeling the ARMMAN setting with RMABs have shown that RMAB policies can reduce engagement drops by $\sim 30\%$ in real-world deployment \cite{mate2022field,killian2023robust,wang2023scalable}. However, these prior works fail to address a persistent challenge within ARMMAN: tuning policies dynamically to prioritize disadvantaged groups, a common issue when targeting new regions or shifting population dynamics \cite{armmanAnnual,rathi2022pandemics}. We utilize an \textit{anonymized} (\autoref{sec:description_of_dataset}) dataset from ARMMAN to develop a \textit{simulation} (\autoref{sec:consent_for_dataset}) representing changing priorities within the maternal health setting, and evaluate our method within this simulated public health environment. We provide a detailed discussion of our consideration of ethical guidelines during the design process of this work in Appendix \ref{sec:social_impact_statement}, \ref{sec:description_of_dataset}, \ref{sec:consent_for_dataset}.

\textbf{In summary, our key contributions are as follows:}
\begin{itemize}
    \item To the best of our knowledge, we are the first to propose using LLMs to adapt to changing resource allocation objectives in public health through reward design in the RMAB setting. 
    \item We introduce a reward proposal loop that enhances LLM-generated reward functions using feedback from restless multi-armed bandit (RMAB) simulations, enabling LLMs to iteratively refine reward design to achieve specific, human-specified policy outcomes. 
    \item  To assess the feasibility of our system in simulation, we evaluate our algorithms' performance using the Gemini Pro model \cite{team2023gemini} in a real-world inspired task of resource allocation in maternal and child care, demonstrating near human-level policy tuning to achieve human-specified outcomes using only language prompts as input. 
\end{itemize}

\vspace{-10pt}
\section{Related Work} 
\vspace{-5pt}

\textbf{RMABs}: The RMAB problem, introduced by Whittle \cite{whittle1988restless}, is classically solved through the Whittle index heuristic policy \cite{weber1990index,glazebrook2006some}. Subsequent works have generalized to multi-action RMABs \cite{glazebrook2011general,killian2022restless}. RMABs gained prominence in public health domains and are deployed to disseminate preventative healthcare information, monitor health program adherence, and model disease spread \cite{Tambe2022AIFS,mate2022field,boehmer2024optimizing}. However, existing works focus on fixed reward functions and fail to consider that public health planners often have evolving priorities \cite{world2015strategies}. 

\textbf{Reward Design}:  Designing reward functions that effectively condense long-term agent goals into immediate behavioral signals is a central problem in RL \cite{singh2009rewards}. Manual designs are prone to task misspecification and overfitting \cite{booth2023perils}. Brys \textit{et al.} \cite{brysReinforcementLearningDemonstration} and Hussein \textit{et al.} \cite{hussein2017deep} reshape reward through expert behavior observation; however, expert examples may not be available in resource and data-limited public health domains. Using LLMs as dense reward signals has been investigated \cite{kwon2022reward,du2023guiding}. Ma \textit{et al.} \cite{ma2023eureka} and Li \textit{et al.} \cite{li2023auto} further use LLMs to output reward function code from language-specified goals. However, this work fails to address multi-agent settings, where a central planner could prioritize certain groups through reward design, and where the ground truth return signal is unknown.

\vspace{-5pt}
\section{Background}
\vspace{-5pt}
We consider an RMAB problem with $N$ arms, each representing public health program beneficiaries, following prior work in the ARMMAN setting \cite{mate2022field,killian2022restless,verma2023expanding}. Each arm $n \in [N]$ follows a Markov decision process $(\mathcal{S}_n, \mathcal{A}_n, \mathcal{C}_n, T_n, R_n, \beta_n)$ with state space $\mathcal{S}_n$, action space $\mathcal{A}_n$, action costs $\mathcal{C}_n : \mathcal{A}_n \rightarrow \mathbb{R}$, and unknown transition probabilities $T_n : \mathcal{S}_n \times \mathcal{A}_n \rightarrow  \Delta(\mathcal{S}_n)$. We define reward function $R_n : \mathcal{S}_n \rightarrow \mathbb{R}$ and discount factor $\beta \in [0, 1)$.

When $\mathcal{S}_n$, $\mathcal{A}_n$, and  $\mathcal{C}_n$ are the same for all arms $n \in [N]$, such as the public health setting we focus on, we omit the subscript $n$. Note our algorithms still apply to more general settings. A planner learns a policy $\pi$ that maps states $\bm s \in S$ to actions in the set $\mathcal{A}$, given the constraint that the sum cost of actions does not exceed budget $B$. The objective is to learn a policy that maximizes the Bellman equation: 
\label{eq:original_opti_problem}
$\max_{\pi\in\Pi}\left\{J(\boldsymbol{s}) :=     
\sum_{n=1}^N R\left(\boldsymbol{s}_n\right)+\beta \, \mathbb{E}_{s' \sim T(\cdot | s, \pi(s))}\left[J\left(\boldsymbol{s}^{\prime}\right) \right]\right\} \, \text { s.t. } \pi(s_n)\in\mathcal{A}, \ \forall n$ and $\sum_{n=1}^N c_{\pi(s_n)} \leq B$~
\noindent for cost of action $c_{\pi(s_n)} \in \mathcal{C}_n$. We take the Lagrangian relaxation \cite{killian2023robust}, fixing an action charge $\lambda$ to decouple the value function across arms: 

\vspace{-10pt}
\begin{align}
\label{eq:lagrangian_relaxation}
J\left(s, \lambda^{\star}\right)&=\min _{\lambda\geq 0}\left(\frac{\lambda B}{1-\beta}+\sum_{n=1}^N \max _{a_n \in \mathcal{A}}\left\{Q_n\left(\boldsymbol{s}_n, a_n, \lambda\right)\right\}\right),\\
\text{s.t.} \ &Q_n\left(\boldsymbol{s}_n, a_n, \lambda\right)\nonumber = R\left(\boldsymbol{s}_n\right) - \lambda c_{a_n} + \beta\, \mathbb{E}\left[Q_n\left(\boldsymbol{s}_n^{\prime}, a_n, \lambda\right) \mid \pi(\lambda)\right] .\nonumber
\end{align}

\noindent Defined for Q-function $Q$, arm $n$, action $a_n$, transitioning from state $s_n$ under action $a_n$ to $s_n'$, and $\pi(\lambda)$ an optimal policy under $\lambda$.  
\cite{shinn2024reflexion,ma2023eureka}. We additionally build on deep RL approaches for RMABs \cite{nakhleh2021neurwin,killian2022restless}, which we discuss in Appendix~\ref{sec:deep_rl_for_rmabs}.

\begin{figure*}[!t]
    \centering
    \includegraphics[width=1.0\textwidth]{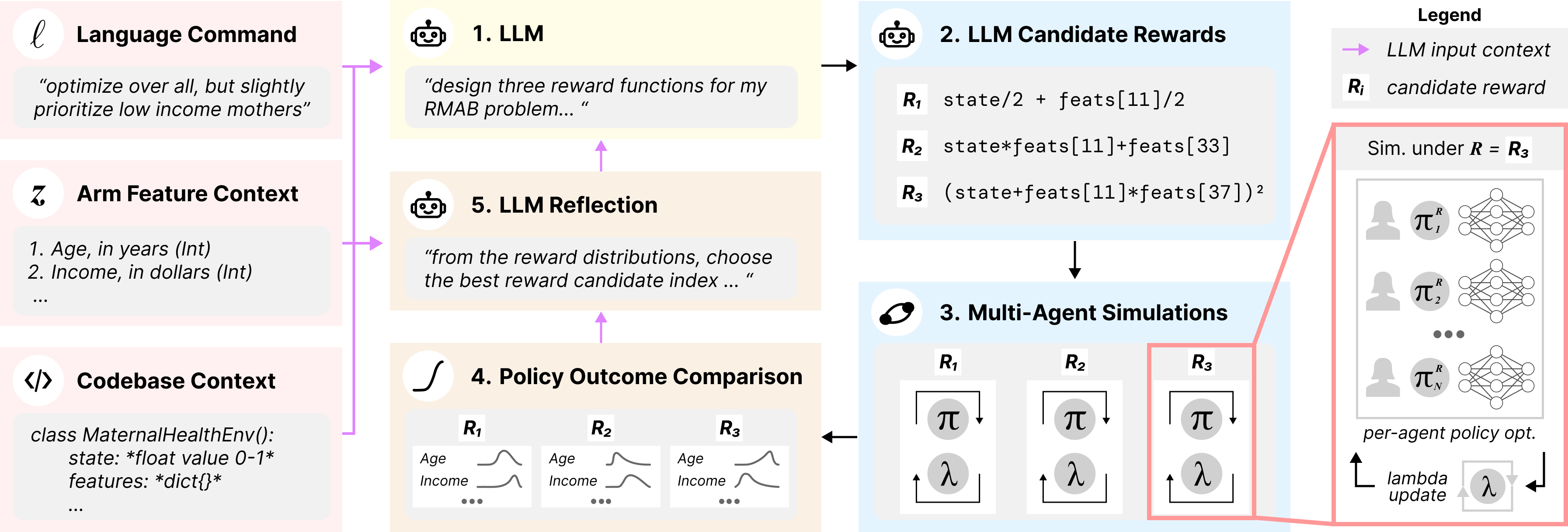}
    \caption{Overview of the {\papername} language-conditioned reward design loop. We provide three {\sethlcolor{inputcontext}\hl{context descriptions}} to the LLM: a language command (full list of commands in \autoref{tab:tasks_list}), a list of per-arm demographic features available for proposed reward functions, and syntax cues enabling LLM reward function output directly in code. From this context, the \textit{1) LLM} then proposes \textit{2) candidate reward functions} which are used to train \textit{3) optimal policies} under proposed rewards. Trained policies are simulated to generate \textit{4) policy outcome comparisons} showing state-feature distributions over key demographic groups. Finally, we query an LLM to perform \textit{5) self-reflection} \protect\cite{shinn2024reflexion,ma2023eureka} by choosing the best candidate reward aligning with the original language command; selected candidates are used as context to guide future reward generation.}
    \label{fig:overview}
\end{figure*}

\vspace{-5pt}
\section{Decision-Language Model for RMABs}
\vspace{-5pt}
Below, we provide an overview of {\papername} (see \autoref{fig:overview}). We first describe the reward generation setting and the context provided to the LLM for initial reward candidate generation (\autoref{fig:overview} Steps 1,2). We next discuss the independent simulation stage (Step 3) and subsequent policy outcome comparison stage that simulates each policy reward outcomes (Step 4). We then discuss LLM reflection, which identifies top candidates to guide future in-context learning for LLM reward generation (Step 5). Finally, we motivate the use of LLMs for reward generation from a theoretical perspective. 

\vspace{-5pt}
\subsection{Problem Setting: Reward Generation for RMABs}
\vspace{-5pt}
We first define the goal of language-conditioned reward design in the RMAB setting. We consider a human-specified language policy goal $\ell$ with a corresponding base reward function that maps policies $\pi \in \Pi$ to real, scalar values: $F: \Pi \rightarrow \mathbb{R}$. This base reward function $F$ establishes a scalar ``ground truth" evaluation that corresponds directly to the human-specified command. Then, our key objective is, given a human-specified language prompt $\ell$ and arm features $\mathbf{z}$, to generate the reward function $R$ that maximizes base function $F$ through optimal policy $\pi^{R*}$ for reward $R$: 

\vspace{-15pt}
\begin{align*}
\max_{R \in \mathcal{R}} F(\pi^{R*}), \text{where: }
\pi^{R*} = &\mathop{\mathrm{argmax}}_{\pi \in \Pi} 
\left\{\sum_{n=1}^N R\left(\boldsymbol{s}_n, \mathbf{z}\right)+\beta \, \mathbb{E}\left[J\left(\boldsymbol{s}^{\prime}\right) \mid \boldsymbol{s}, \pi\right]\right\}\\
&\text { s.t. } \pi(s_n)\in\mathcal{A} , \ \forall n \quad \text{and} \quad \sum_{n=1}^N c_{\pi(s_n)} \leq B.  
\end{align*}
\vspace{-10pt}

Note that we do \textit{not assume access} to the ground truth reward function $F$ during training; however, we may query this ground truth ``Base" reward at test time to evaluate a proposed policy. We propose using LLMs to: 1) propose reward function $R$, such that optimal policy $\pi^{R*}$, when evaluated via the unknown base function $F$, maximizes reward output, and 2) automatically refine $R$ through self-reflection.

\vspace{-5pt}
\subsection{Provided DLM Context}
In the reward function generation phase, we prompt the LLM with three key contextual components (shown in Alg. \ref{alg:dlm_alg}, line 1). First, we include a human-language command $\ell$ that describes a desired policy outcome targeting a feature-based group (see \autoref{tab:tasks_list} for examples). In practice, this is the \textit{only human input required} for the proposed DLM technique. Second, we provide arm feature information, denoted $\mathbf{z}$. In the public health setting, features may include demographic information, socioeconomic data, or other descriptors relevant to policy outcomes; we provide the features used in our simulated setting in Appendix \autoref{fig:feature_list}. We propose that LLMs, given a language prompt $\ell$, may be used to extract the relevant features $\mathbf{z}_{\ell} \subseteq \mathbf{z}$ from $\ell$ to design a reward function $R$. Third, we provide context regarding the RMAB setting and code implementation. Specifically, we include information describing state, a per-arm binary value, and syntax for accessing feature information (example prompt shown in \autoref{sec:sample_prompt}). Using this context, we then query an LLM to propose reward functions as code (following  \cite{ma2023eureka,li2023auto}) to achieve language-specified outcomes; to improve the alignment of proposed functions to these language-specified goals, we then introduce a key multi-agent simulation stage within the DLM loop. 

\vspace{-10pt}
\begin{center}
\begin{minipage}{\linewidth} 
\begin{algorithm}[H]
\caption{\textsc{\papername}}
\footnotesize
\begin{algorithmic}[1]
\State \colorbox{inputcontext}{\textbf{Input: } Task $\ell$, feature $\mathbf{z}$, and code $\zeta$ context, \textbf{Output}: $R_{\text\papername}$}
\State \textbf{Hyperparams:} Loop iters. $I$, proposal batch $K$, sim. epochs $n_{\text{e}}$, sim. steps $n_{\text{s}}$, num. arms $N$
\For{iteration $= 1$ to $I$}
    \State \colorbox{llmgen}{\small \textit{\#\# LLM reward proposal:}} Sample LLM reward funcs: $R_{1:K} \sim \text{LLM}(\ell, \mathbf{z}, \zeta)$, init. reflection string $\mathcal{Y}$
    \State \colorbox{sim}{\small \textit{\#\# Multi-agent simulations:}} Init. policy and critic net. $\theta, \Phi$
    \For{reward $i=1$ to $K$}        
        \State Init. $\lambda$-net. $\Lambda$, buffer $\mathcal{D}$, states $\mathbf{s} = \mathbf{s}_0$, features $\mathbf{z}$ and clear reflection string $\mathcal{Y}=[]$
        \For{epoch $= 1$ to $n_{\text{e}}$}
            \State Get lambda for current states: $\lambda = \Lambda(\mathbf{s})$
            \For{timestep $t = 1$ to $n_{\text{s}}$}
                \State Sample actions: $a_n \sim \theta(s_n, \lambda) \quad \forall n \in [N]$, simulate: $\mathbf{s'}, \mathbf{r} = \text{Simulate}(\mathbf{s}, \mathbf{a}, R_i, \mathbf{z})$,
                \State update buffer $\mathcal{D} \gets \mathcal{D} \cup \{(\mathbf{s}, \mathbf{a}, \mathbf{r}, \mathbf{s'}, \lambda)\}$, update states: $\mathbf{s} \gets \mathbf{s'}$
            \EndFor
            \State Update $(\theta, \Phi)$ with PPO using $\mathcal{D}$ and update $\Lambda$ with $\mathcal{D}$
        \EndFor 
    \State \colorbox{policycomp}{\small \textit{\#\# Outcome comparison (Alg. \ref{alg:outcome_analysis})}:} Update $\mathcal{Y} \gets \mathcal{Y} \cup \textit{OutAnalysis}(\theta, \Phi, \lambda, \mathbf{Z})$
    \EndFor
\State \colorbox{policycomp}{\small \textit{\#\# Top candidate selection}:} $R_{\text{\papername}} \gets \text{LLM}(\ell, \mathcal{Y}, \text{``choose best..."})$, $\zeta \gets \zeta \cup \{R_{\text{\papername}} \}$ 
\EndFor
\end{algorithmic}
\label{alg:dlm_alg}
\end{algorithm}
\end{minipage}
\end{center}

\vspace{-5pt}
\subsection{Multi-Agent Simulation}
\vspace{-5pt}

We evaluate each LLM-proposed reward function $R_{1:K}$ by training a policy network $\theta$ under each proposed reward $R_i$ (shown Alg. \ref{alg:dlm_alg}, lines 5:13). We consider a simulation space which defines the number of arms $N$, each with fixed, but hidden, transition dynamics. Following the procedure defined in Alg. \ref{alg:dlm_alg}, for a given reward $R_i$, we first sample an action-charge $\lambda$, which is used to decouple the learning of arm policies (Eq. \ref{eq:lagrangian_relaxation}, Alg. \ref{alg:dlm_alg} line 9). Then, we simulate $n_s$ timesteps, sampling trajectories for each arm under the action-charge $\lambda$ (Alg. \ref{alg:dlm_alg} lines 10:12), following \cite{killian2023robust,zhao2023towards}. These trajectories are used in a buffer $\mathcal{D}$ (Alg. \ref{alg:dlm_alg} line 12) to eventually update the policy and critic network$(\theta, \Phi)$ (Alg. \ref{alg:dlm_alg} line 13). Note that we compute PPO advantage estimates for the actor from estimated Q-values.

\vspace{-5pt}
\subsection{Reflection Stage}
\label{sec:reflection_description}
\vspace{-5pt}
We propose enabling LLM self-reflection using a policy outcome comparison procedure described in \autoref{app:outcome_analysis} Algorithm \ref{alg:outcome_analysis}. Unlike prior works \cite{ma2023eureka}, we do not assume access to a numerical, scalar fitness function $F$ to validate reward functions. In the absence of this scalar feedback, we propose showing state-feature distributions, described below, for LLMs to select top candidate reward functions, and use these top candidates to guide future generations. Given trained policy and critic network $\theta, \Phi$, learnt lambda value $\lambda$, and feature matrix $\mathbf{Z} \in \mathbb{R}^{N \times m}$, we first simulate over $n_\text{s}$ evaluation timesteps (Alg. \ref{alg:outcome_analysis} lines 5:7). For stored accumulated agent states $\mathbf{S}$, we designate preferred ``positive states"; this may be customized depending on the setting, but we consider a binary per-agent state with preferred "positive" state 1. Using accumulated states, we then compute the percentage of positive \textit{states}, out of all positive states observed, accrued from each key feature group $g \in \mathbf{G}$ over the evaluation timesteps, as a signal for intervention effect (Alg. \ref{alg:outcome_analysis} lines 9:11). These \textit{state-feature distributions} quantify the distribution of accumulated positive \textit{states} over key demographic \textit{features}, ultimately describing the percentage of total positive intervention effects attributed to each discretized age range, education level, or income bracket, for example. Each distribution is reported as an output string (see \autoref{app:reflectfull} for examples).

We then query an LLM to select the best candidate example given the original prompt $\ell$ and the outcome distribution string $\mathcal{Y}$ (Alg. \ref{alg:dlm_alg} line 15). This selected example is then added as context in the prompt of the next iteration of reward generation (Alg. \ref{alg:dlm_alg} line 15). Intuitively, because we focus on resource allocation in the public health domain, \textit{we propose that simulated state-feature distributions provide sufficient signal for the LLM to analyze proposed rewards and select top candidate examples} for future in-context learning. This approach has two key strengths. First, we enable greater transparency through state-feature distribution analysis. By providing \textit{only} simulated outcomes of policies under specified rewards $\pi^{R*}$ for reflection, we ensure that self-reflection focuses purely on alignment between the stated language goal and the health resource allocation outcomes, \textit{without assuming any access to ground truth reward}. Second, we allow for greater flexibility; by investigating only the outcome distributions of the proposed rewards, rather than original policies themselves, we enable a model-agnostic approach to self-iteration in reward design.

\vspace{-5pt}
\subsection{LLM Reward Generation Capability}
\vspace{-5pt}

We theoretically justify our method and provide further insight into how LLMs generate a reward function via reflection. We propose that the LLM can be interpreted as following a hyperparameter optimization algorithm: (1) Find the relevant features from $z$ using world knowledge (2) Generate a reward function which is a weighted sum of the relevant features and observe the outcome of the optimal policy corresponding to this reward via state-feature distribution feedback (see Section~\ref{sec:reflection_description})  (3) Find the state-feature distribution that best conforms to the human language command (4) Optimize the weights to obtain a better candidate reward function.

We find empirically in \autoref{tab:reward_reflection_metrics} and \autoref{app:logical-analysis}~ that the LLM indeed implements the first step of the proposed algorithm, by analyzing the precision and recall of finding relevant features through the usage of world knowledge and logical reasoning. This finding corroborates the results found in the logical reasoning analysis of language models done by Clark et al. \cite{clark2020transformers}. We assume that the LLM can evaluate reward functions by assigning a value to their state-feature distribution. In this section, we prove by construction that a transformer can implement the second step of this algorithm and give complexity bounds and correctness guarantees of this optimization process.

In our setting, we consider the embedding space $\mathbb{R}^{d}$, whose elements encode states and features of an arm. We assume that there is an embedding function $\phi$ which maps features of an arm (denoted by $z$) and its state $s$ to a vector in $\mathbb{R}^d$. That is $\exists \phi: \phi(s,z) \in \mathbb{R}^d$. We consider each entry of the vector $\phi(s,z)$ to correspond to some simple single feature of the arm such as (age group) or a complex combination of simple features such as (age group, education level). Let $\mathbbm{1}()$ denote the indicator function. For example, the first entry of the vector $\phi$ could be $\phi_{1}(s,z) = \mathbbm{1}( 20 \leq \text{Age}(z) \leq 25, s = 0)$ allowing us to set the rewards to arms in state 0 and age between 20 and 25. The $10$-th entry could be $\phi_{10}(s,z) = \mathbbm{1}(20 \leq \text{Age}(z) \leq 25, \text{Education}(z) = \text{High School}, s = 1 )$, allowing us to set the rewards for arms in state 1, ages between 20 and 25 and education level being high school.

Based on empirical evidence of the LLM generated rewards, we consider the reward function for arm with features $z$ in state $s$ is of the form $R(s,z) = w^T \phi(s,z)$ for some $d$ dimensional vector $w \in \mathbb{R}^d$ with non-negative entries. This represents a rich class of rewards. In the example above, if $w_{1} = 1$ and $w_{10} = 10$ and all other entries of $w$ are zeros, the reward function is:
\begin{equation}
    R(s,z) = \begin{cases}
        1 \text{ if } 20 \leq \text{Age}(z) \leq 25, s = 0 \\
        10 \text{ if } 20 \leq \text{Age}(z) \leq 25, \text{Education}(z) = \text{High School}, s = 1 \\
        0 \text{ otherwise}
    \end{cases}
\end{equation}
  We let the true reward function to be $R^{*}(s,z) = (w^{*})^{T}\phi(s,z)$ for some $w^{*} \in \mathbb{R}^n$ with non-negative entries. Assume that $w^{*}$ is a sparse vector - i.e, the number of non-zero entries is a constant independent of $n$. This roughly means that the reward depends on only a few features. Let $\mathsf{Supp}(w^{*})$ be the set of indices where $w^{*}$ has non-zero entries. Let $\|w^{*}\|_0 := |\mathsf{Supp}(w^{*})|$.

Finding $\mathsf{Supp}(w^{*})$ can be a challenging task, since $n$ can be very large. Based on the discussion above, in Step 1, the LLM is able to use its world knowledge and find $\mathsf{Supp}(w^{*})$. We now try to understand how the LLM approximates the corresponding entries of $w^{*}$ via steps 2,3 and 4. Based on empirical analysis of LLM outputs, we propose the following hyperparameter search algorithm in the log-space to model its behavior. For some $\alpha > 1$, denote the search space to be $S = \{\alpha^k: -K \leq k \leq K\} $ for some integer $K$. The algorithm searches for $(w_i)_{i \in \mathsf{Supp}(w^{*})} \in S^{|\mathsf{Supp}(w^{*})|}$. Here, $\alpha$ represents the base of the logspace, or level of granularity of the discretization. In practice, we found that $\alpha$ was often set to be 10 by the LLM. Let $V(w,w^{*})$ denote the value assigned by the LLM to the state-feature distribution corresponding to the reward $w^{T}\phi(s,z)$ (as proposed in Step 3). We assume that the $V(,)$ has the following monotonicity property, which roughly states that if $w$ comes closer to $w^{*}$, then $V(w,w^{*})$ increases. 
\begin{definition}[Monotonicity]
 We say that the return function $V$ is monotone if for any $u,v \in \mathbb{R}^n$, if $|u_i - w^{*}_i| \geq |v_i - w^{*}_i|$ for every $i \in [n]$, then $V(u,w^{*}) \leq V(v,w^{*})$.
\end{definition}

\begin{proposition}\label{prop:transformer-gridsearch-bound}
Assume monotonicity for $V(\cdot,w^{*})$ and let $\hat{w} := \arg \max_{w \in  S^{|\mathsf{Supp}(w^{*})|}}V(w,w^{*})$.  There exists a transformer $T$ with constant depth $D$ and width $O(\|w^{*}\|_0K)$ which can find $\hat{w}$ with $O(\|w^{*}\|_0 K)$ samples, with oracle access to $V(\cdot,w^{*})$.
\end{proposition}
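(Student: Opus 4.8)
The plan is to first exploit the monotonicity hypothesis to show that the global maximizer $\hat{w}$ decouples across coordinates, and then to build a shallow transformer that realizes the resulting coordinate-wise line search. For each $i \in \mathsf{Supp}(w^{*})$ let $w^{(i)}_{\star} := \arg\min_{a \in S} |a - w^{*}_i|$ be the grid point closest to $w^{*}_i$; this is well defined because $S$ is totally ordered and $a \mapsto |a - w^{*}_i|$ is unimodal. I claim the vector $\tilde{w}$ with $\tilde{w}_i = w^{(i)}_{\star}$ equals $\hat{w}$. Indeed, for any $w \in S^{|\mathsf{Supp}(w^{*})|}$ we have $|w_i - w^{*}_i| \ge |\tilde{w}_i - w^{*}_i|$ for every $i$ by definition of $w^{(i)}_{\star}$, so monotonicity gives $V(w,w^{*}) \le V(\tilde{w},w^{*})$; hence $\tilde{w}$ is a maximizer.

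The same domination argument shows each coordinate can be optimized independently by a line search. Fix the other coordinates at any default value $d$, and for coordinate $i$ sweep $w_i$ over the $|S| = 2K+1$ grid points while querying $V$. Writing $u = (a, d_{-i})$ and $v = (b, d_{-i})$ with $|a - w^{*}_i| \ge |b - w^{*}_i|$, monotonicity yields $V(u,w^{*}) \le V(v,w^{*})$, so the sweep is maximized exactly at $a = w^{(i)}_{\star}$ regardless of $d$. Doing this for all support coordinates uses $\|w^{*}\|_0 (2K+1) = O(\|w^{*}\|_0 K)$ oracle queries and recovers all of $\tilde{w} = \hat{w}$, which accounts for the stated sample complexity. (One could shave this to $O(\|w^{*}\|_0 \log K)$ via ternary search on the unimodal sweep, but the linear version already matches the claim.)

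It remains to realize this procedure as a constant-depth transformer of width $O(\|w^{*}\|_0 K)$. Since the search must track all $\|w^{*}\|_0(2K+1)$ oracle values simultaneously, I would use an embedding of dimension $O(\|w^{*}\|_0 K)$ in which these values, together with positional tags for their coordinate $i$ and grid index $k$, are laid out in $\|w^{*}\|_0$ contiguous blocks of length $2K+1$. An initial feedforward block emits the candidate query vectors $(\alpha^k, d_{-i})$ for the oracle; after the returned values are written back, a single self-attention head, masked or biased by the coordinate tag so that each token attends only within its own block and sharpened so that the softmax performs hardmax selection, identifies within each block the index $\hat{k}_i$ of largest value. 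A final feedforward block reads off $\alpha^{\hat{k}_i}$ into the $i$-th output slot to assemble $\hat{w}$. Because the block structure lets all $\|w^{*}\|_0$ selections run in parallel, the depth $D$ stays an absolute constant while the width is $O(\|w^{*}\|_0 K)$, as claimed.

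The main obstacle I anticipate lies in the transformer construction rather than the optimization reduction. One must make the per-block selection provably correct: the sharpened attention only approximates hardmax, so I would argue a strictly positive gap between the largest value and the rest of its block and set the attention temperature as a function of that gap (or invoke an exact max-selection head) to guarantee the correct grid index is chosen. Equally, one must pin down a precise notion of "oracle access" inside a feedforward pass — specifying that the architecture produces the $O(\|w^{*}\|_0 K)$ query vectors and consumes their returned values — after which the depth and width bookkeeping, and the confinement of each $\arg\max$ to its coordinate block via positional encoding, become routine.
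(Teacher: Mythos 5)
Your proposal is correct, but it takes a genuinely different route from the paper's. The paper proves the proposition by exhibiting a sequential greedy algorithm, \textsc{IteratedLineSearch}: starting all coordinates at $\min S$, it multiplicatively increases one coordinate by $\alpha$ as long as the oracle value improves, advances to the next coordinate at the first non-improvement, and then argues correctness by a two-case analysis at each stopping point ($w_i = \max S$ versus an interior stop), using monotonicity to conclude coordinate-wise optimality; the transformer realization is obtained by compiling the \textsc{OptimizeStep} subroutine through RASP/TRACR, with the algorithm state ($i$, $w$, $w'$, oracle values) carried across $O(\|w^{*}\|_0 K)$ forward passes. You instead identify the maximizer directly --- $\hat{w}$ is the coordinate-wise closest grid vector $\tilde{w}$, by one application of monotonicity --- and recover it by $\|w^{*}\|_0$ \emph{exhaustive, parallel} sweeps of all $2K+1$ grid points per coordinate, realized in a single constant-depth pass via block-structured attention performing per-block $\arg\max$. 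Both give $O(\|w^{*}\|_0 K)$ queries. Your version is more self-contained (no RASP compilation, no stopping-rule or unimodality reasoning, all selections in one parallel pass), while the paper's greedy search is adaptive (it can terminate early without sweeping the full grid) and leans on an established program-to-transformer compiler rather than a bespoke attention construction. One shared caveat: under the paper's \emph{non-strict} monotonicity, ties in $V$ can defeat both arguments --- your per-sweep $\arg\max$ may select a tied but non-closest grid point (and the assembled vector need not then be optimal, e.g.\ for $V(u,w^{*}) = -\max_i |u_i - w^{*}_i|$ with a poor default $d$), just as the paper's ``contrapositive'' step at an interior stop only yields equality of values, not coordinate optimality. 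Your ``strictly positive gap'' assumption, introduced for the attention hardmax, is in fact also what is needed to patch this optimization-level issue in both proofs, so you have implicitly flagged the right fix.
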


This shows that the LLM has the capability to correctly optimize reward weights and converge to a good reward function, under certain assumptions. We give a detailed proof in \autoref{app:proof}. 

\vspace{-5pt}
\section{Experimental Evaluation}
\label{sec:experimental_evaluation}
\vspace{-5pt}

\begin{figure*}[t!]
    \centering
    \includegraphics[width=1.0\textwidth]{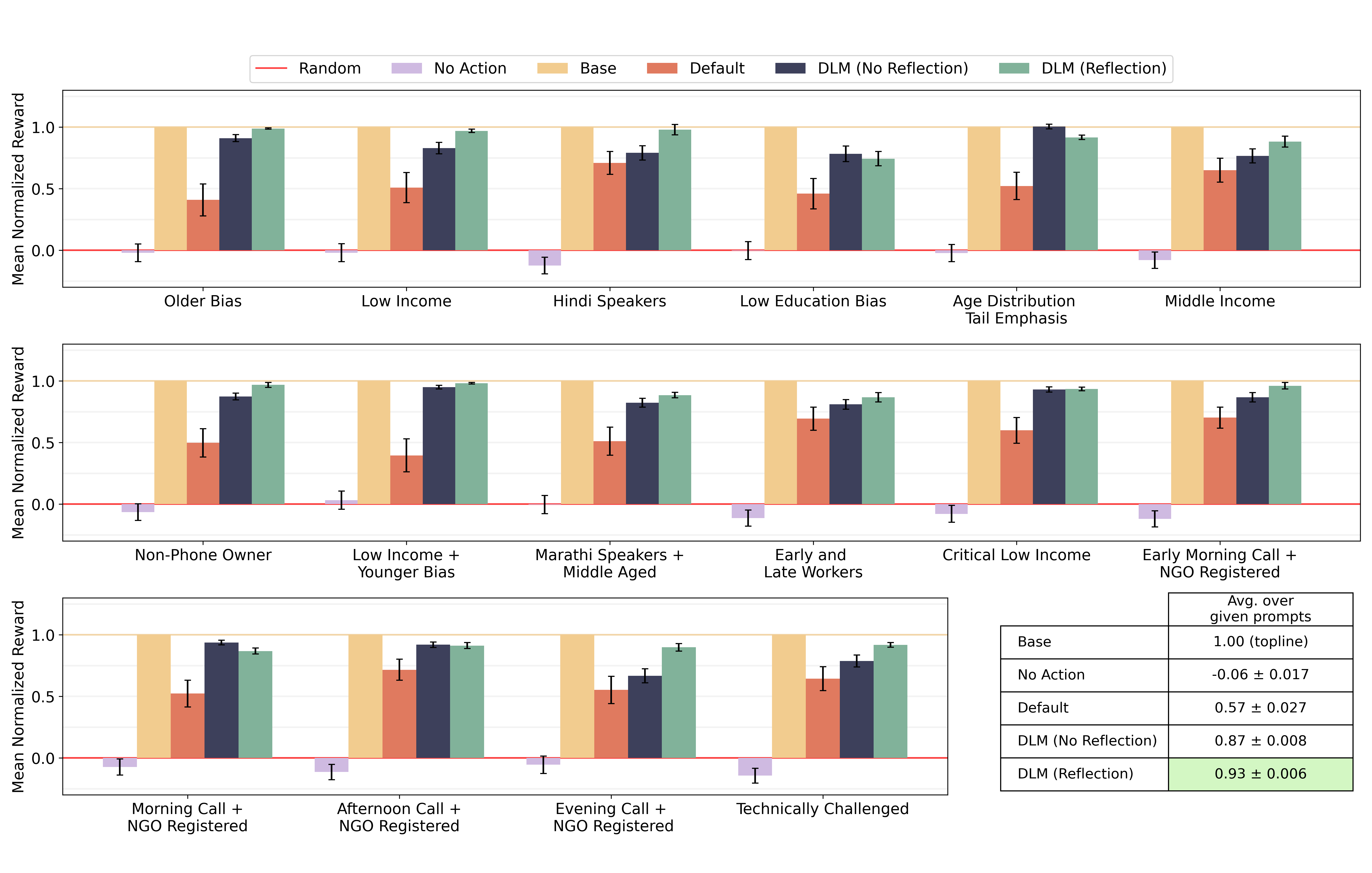}
    \caption{Main results. We compute normalized reward (Section \ref{subsec:tasks_baselines_metrics}) for each method over 200 seeds, and report the interquartile mean (IQM) and standard error of the IQM across all runs \protect\cite{agarwal2021deep}. We compare the topline \basemodel{Base} reward policy to the performance of {\papername} with \dlmnoreflectmodel{No Reflection} and with \dlmreflectmodel{Reflection}. We also compare to a \noactionmodel{No Action} and \randommodel{Random} policy, and a \defaultmodel{Default} policy that demonstrates how the original (fixed) reward function would perform for each new task.  Our method is able to achieve near-base reward performance across tasks, and consistently outperform the fixed Default reward policy in a completely automated fashion. For some tasks, {\papername} with Reflection is also able to significantly improve upon zero-shot proposed reward.}
    \label{fig:results_main}
\end{figure*}

\subsection{Simulated Public Health Setting}
We evaluate {\papername} in a \textit{simulated setting} developed in collaboration with ARMMAN, an India-based non-profit that enrolls pregnant women (beneficiaries) from low-income communities \cite{killian2022restless,mate2022field} into an automated messaging service sharing preventative healthcare information. ARMMAN employs health workers to provide live service calls to beneficiaries with a higher risk of program drop-out, however, the beneficiary population far outnumbers the available health workers. Currently, deployed RMABs in the ARMMAN setting model beneficiaries as arms, live health worker calls as intervention actions \cite{verma2023increasing,seow2024improving,zhao2024bandit}, and binary arm states $0$ as "not engaged" and $1$ as a preferred "engaged" (positive) state when beneficiaries listen to an automated message for $>30$ seconds. The RMAB policy then identifies \textit{individuals across the entire beneficiary population} to live call such that overall program engagement is maximized \cite{verma2023restless,wang2023scalable}. We consider this the old, fixed ``Default" policy goal. However, a key challenge for ARMMAN is training \textit{dynamic} policies to support specific underrepresented groups, for example when the program is deployed in different regions \cite{armmanAnnual}. We simulate these dynamic policy goals with 16 distinct prompts (\autoref{tab:tasks_list}) that describe demographic feature-based subpopulation priorities. We simulate this environment using a fully anonymized dataset collected by ARMMAN in January 2022, created from a service quality improvement study of 7,668 mothers; we compute transition probabilities from this dataset to simulate actions of real-world beneficiaries.

\textbf{Data usage and safety}: We use a fully anonymized dataset to simulate the ARMMAN public health setting; all experiments are a secondary analysis using this anonymized dataset, and are conducted with approval from the ARMMAN board of ethics. There is \textit{no actual deployment} of this technology demonstration in the real-world ARMMAN setting. Results from a \textit{simulated setting} are shown to highlight potential use cases of our proposed system. For more details about the dataset and consent of data collection, please see Appendix Section \ref{sec:description_of_dataset} and Section \ref{sec:consent_for_dataset}.

\vspace{-5pt}
\subsection{Tasks, Baselines, and Metrics}
\vspace{-5pt}
\label{subsec:tasks_baselines_metrics}
We provide experimental results for 16 prompts (\autoref{tab:full_results_table}) which include examples emphasizing both single and multiple combined feature categories. We compute per-prompt \textit{mean normalized reward} (described below) over 200 seeds. We evaluate DLM against several baseline reward functions:

\textbf{Random}: This policy samples arms uniformly at random, and has an MNR score of 0 through normalization. \textbf{No Action}: This policy samples no arms at each timestep, typically yielding negative MNR or near-random MNR in certain sparse reward tasks. \textbf{Default}: This policy is trained with reward prioritizing \textit{all beneficiaries equally}, representing the old, fixed ARMMAN goal; thus, reward=1 for state 1 and reward=0 for state 0. Most prior work in RMABs assume this reward. \textbf{Base}: Reward is equal to the ground truth fitness function, a human-specified goal function described by the potentially ambiguous language prompt. At test time, we evaluate all methods using Base reward, and consider Base-trained policy as topline (MNR=1). To be clear, each method uses a different reward at train time, but uses \textit{Base} reward at evaluation time. \textbf{{\papername} (No Reflection)}: This version of our proposed method ablates the reflection and iterative refinement process, allowing us to assess the importance of reflection in improving zeroshot LLM-generated reward functions. \textbf{{\papername} (Reflection)}: This is the full version of our proposed method. 

We compute mean normalized reward (MNR) as the average reward achieved per policy, over 200 training seeds, \textit{using base reward during test-time evaluation}. We normalize each reward $R$ as $({R - R_\textit{rand}})/{(R_\textit{base} - R_\textit{rand}})$. An MNR of 0 therefore implies a policy's performance equivalent to random allocation, while an MNR of 1 implies performance at the level of base reward policy. 

\subsection{Training Details}
\label{subsec:training_details}
We use the Gemini Pro LLM model \cite{team2023gemini} from Google to generate reward functions, and train our downstream policy with RL using PPO (Alg. \ref{alg:dlm_alg} line 13) \cite{schulman2017proximal,zhao2023towards} and the generated reward function. The downstream policy is trained for $5$ epochs, with each epoch containing 100 simulation steps accumulated in a buffer. Thus, we train on a total of 500 samples for each arm. Each such training is run for each proposed reward function, over $2$ rounds of reflection and $2$ candidate reward functions per round. We evaluate each \textit{prompt} over $200$ seeds, with $50$ separate trials per seed and $10$ trajectory steps per trial, totaling $100,000$ total steps per task.

\begin{figure*}[tb!]
    \centering
    \includegraphics[width=1.0\textwidth]{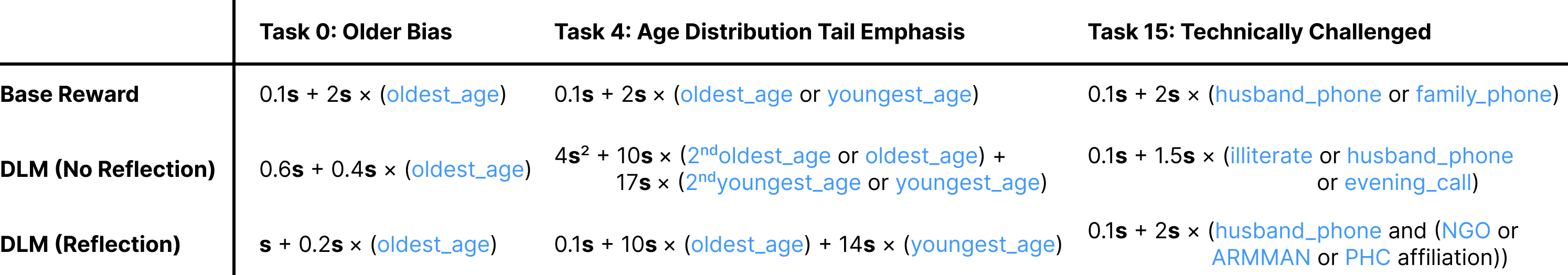}
    \caption{Examples of DLM-generated reward functions vs. ground truth Base reward. Rewards reformatted for clarity; \textbf{s} represents the binary state, numbers are scalar multiplier quantities, and named \textcolor{feats_color}{features}, each binary quantities, are shown. In some cases (ex: Older Bias) DLM may identify relevant features zeroshot, and use reflection to refine weights. Alternatively, reflection may help refine features (ex: Age Distribution Tail Emphasis). However, when prompts are ambiguous (ex: Technically Challenged), reflection may not have sufficient signal to effectively iterate; in these cases, additional human feedback may be required. }
    \label{fig:llm_reward_exs}
\end{figure*}

\vspace{-5pt}
\subsection{Evaluation of DLM Performance}
\vspace{-5pt}
We present the experimental evaluation of {\papername} in Figure \ref{fig:results_main}. To the best of our knowledge, \textit{we are the first to demonstrate LLMs as capable policy designers for resource allocation tasks in public health.} We make several observations on the results shown in Figure \ref{fig:results_main}. 

\noindent\textbf{DLM approaches Base reward performance}: We find {\papername} approaches Base reward performance across a broad range of tasks. Note that mean normalized reward (MNR) scores reported in Figure \ref{fig:results_main} are normalized as described in Section \ref{subsec:tasks_baselines_metrics}; this score therefore represents performance above random policy (MNR=0) compared to the topline Base reward policy (MNR=1). We find that DLM (Reflection) achieves an average MNR score of $0.93 \pm 0.006$, achieving 90\% or higher of base level performance in 11/16 tasks, demonstrating to the best of our knowledge the first example of using LLMs to adapt to changing resource allocation objectives in public health through reward design. 

\noindent\textbf{Reflection improves performance for many tasks}: We find that over the 16 tested prompts DLM (No Reflection) achieves an average MNR of $0.87 \pm 0.008$, while DLM (Reflection) achieves an average MNR of $0.93 \pm 0.006$. Further, 8/16 of the prompts showed a significant increase in performance after reflection (Appendix  \autoref{tab:result_statistical_tests}). Notably, we observe that in cases where reflection does not improve performance, DLM (No Reflection) often achieves high zero-shot reward, obviating the need for the reflection stages. For example, we observe for the \textit{Age Distribution Tail Emphasis} prompt that DLM (No Reflection) achieves an MNR score approximately equal to the normalized Base score. Thus, DLM (Reflection) may unnecessarily iterate on functions that already achieve Base reward, resulting in degraded performance. Such occurrences are rare, and may have little practical difference in the resulting performance, which still achieves near-Base score. 

\noindent\textbf{DLM consistently outperforms Default reward}: We observe that {\papername} consistently outperforms the Default reward policy, which represents the performance of the previously-used, fixed reward function across the given prompts. We observe an average Default reward MNR score of $0.57 \pm 0.027$, compared to $0.87 \pm 0.008$ for DLM (No Reflection) and $0.93 \pm 0.006$ for DLM (Reflection). Furthermore, we observe that in 11/16 prompts DLM (No Reflection) significantly outperforms Default policy, while in 16/16 prompts DLM (Reflection) outperforms Default (Table \ref{tab:result_statistical_tests} in Appendix). In summary, we find that DLM can meaningfully adjust policies, using only language input, to align more closely to human preferences. Additionally, this highlights that the previous fixed Default reward is ineffective in accommodating the diverse prompts, and hence policy goals, described by humans in our simulated setting. 

\vspace{-5pt}

\begin{table}[!tb]
\centering 
\caption{Average precision/recall of features used in LLM-proposed reward functions compared to ground truth Base reward function. Comparison between zeroshot DLM (No Reflection) and DLM (Reflection). Cells in yellow showed improvement from Reflection with $p<0.1$; cells in green showed improvement from Reflection with $p<0.05$. Results indicate LLMs are very effective feature extractors for reward function generation. Furthermore, the Reflection module is particularly useful for improving recall rates, as 13/16 tasks showed significant recall improvement with Reflection.}
\label{tab:reward_reflection_metrics}
\scalebox{0.68}{
\begin{tabular}{@{}lcccccccc@{}}
\toprule
\textbf{Task} & \textbf{0} & \textbf{1} & \textbf{2} & \textbf{3} & \textbf{4} & \textbf{5} & \textbf{6} & \textbf{7} \\
\midrule
\textbf{Prec. (zeroshot)} & \cellcolor[HTML]{FAEAAA}0.49 ± 0.028 & 0.27 ± 0.016 & 0.93 ± 0.014 & 0.45 ± 0.022 & 0.81 ± 0.019 & \cellcolor[HTML]{C7FBBC}0.48 ± 0.017 & \cellcolor[HTML]{C7FBBC}0.07 ± 0.014 & \cellcolor[HTML]{C7FBBC}0.33 ± 0.017 \\
\textbf{Prec. (reflection)} & \cellcolor[HTML]{FAEAAA}0.54 ± 0.026 & 0.29 ± 0.015 & 0.87 ± 0.017 & 0.45 ± 0.022 & 0.84 ± 0.018 & \cellcolor[HTML]{C7FBBC}0.54 ± 0.016 & \cellcolor[HTML]{C7FBBC}0.11 ± 0.017 & \cellcolor[HTML]{C7FBBC}0.38 ± 0.015 \\ \midrule
\textbf{Rec. (zeroshot)} & \cellcolor[HTML]{C7FBBC}0.72 ± 0.032 & \cellcolor[HTML]{C7FBBC}0.64 ± 0.034 & 1.00 ± 0.000 & \cellcolor[HTML]{FAEAAA}0.87 ± 0.024 & \cellcolor[HTML]{C7FBBC}0.89 ± 0.016 & \cellcolor[HTML]{C7FBBC}0.74 ± 0.025 & \cellcolor[HTML]{C7FBBC}0.11 ± 0.021 & \cellcolor[HTML]{C7FBBC}0.67 ± 0.027 \\
\textbf{Rec. (reflection)} & \cellcolor[HTML]{C7FBBC}0.84 ± 0.026 & \cellcolor[HTML]{C7FBBC}0.72 ± 0.032 & 1.00 ± 0.000 & \cellcolor[HTML]{FAEAAA}0.92 ± 0.020 & \cellcolor[HTML]{C7FBBC}0.94 ± 0.013 & \cellcolor[HTML]{C7FBBC}0.84 ± 0.020 & \cellcolor[HTML]{C7FBBC}0.17 ± 0.026 & \cellcolor[HTML]{C7FBBC}0.80 ± 0.020 \\ \toprule
\textbf{Task} & \textbf{8} & \textbf{9} & \textbf{10} & \textbf{11} & \textbf{12} & \textbf{13} & \textbf{14} & \textbf{15} \\ \midrule
\textbf{Prec. (zeroshot)} & \cellcolor[HTML]{C7FBBC}0.78 ± 0.018 & 0.41 ± 0.012 & 0.93 ± 0.014 & 0.87 ± 0.015 & 0.93 ± 0.012 & 0.97 ± 0.009 & 0.96 ± 0.010 & \cellcolor[HTML]{FAEAAA}0.04 ± 0.009 \\
\textbf{Prec. (reflection)} & \cellcolor[HTML]{C7FBBC}0.83 ± 0.015 & 0.42 ± 0.010 & 0.93 ± 0.013 & 0.84 ± 0.013 & 0.95 ± 0.011 & 0.96 ± 0.009 & 0.94 ± 0.011 & \cellcolor[HTML]{FAEAAA}0.06 ± 0.010 \\ \midrule
\textbf{Rec. (zeroshot)} & \cellcolor[HTML]{C7FBBC}0.65 ± 0.015 & \cellcolor[HTML]{C7FBBC}0.46 ± 0.014 & \cellcolor[HTML]{C7FBBC}0.43 ± 0.013 & 0.97 ± 0.009 & \cellcolor[HTML]{C7FBBC}0.96 ± 0.009 & \cellcolor[HTML]{C7FBBC}0.98 ± 0.006 & 0.98 ± 0.006 & \cellcolor[HTML]{FAEAAA}0.09 ± 0.017 \\
\textbf{Rec. (reflection)} & \cellcolor[HTML]{C7FBBC}0.74 ± 0.013 & \cellcolor[HTML]{C7FBBC}0.51 ± 0.011 & \cellcolor[HTML]{C7FBBC}0.54 ± 0.015 & 0.98 ± 0.006 & \cellcolor[HTML]{C7FBBC}0.99 ± 0.004 & \cellcolor[HTML]{C7FBBC}1.00 ± 0.002 & 0.99 ± 0.005 & \cellcolor[HTML]{FAEAAA}0.13 ± 0.020 \\ \midrule
\end{tabular}
}
\end{table}

\vspace{-5pt}
\subsection{LLM-Generated Rewards and Reflection in Public Health Settings}
\vspace{-5pt}
In this section we provide insight into LLM-generated reward functions and the reward Reflection module. We find that LLMs can reliably design and improve upon reward functions for public health goals, enabling automated policy tuning in resource-limited settings.

\noindent\textbf{LLMs accurately interpret language prompts for policy shaping}: We provide examples of LLM-generated reward functions compared to ground truth Base rewards in Figure \ref{fig:llm_reward_exs}. We find that DLM can consistently capture the ground truth Base reward, and use reflection to improve upon these proposed functions. However, in cases where language is ambiguous (e.g. ``Technically Challenged" example), inherent language ambiguities may result in misaligned proposals. These findings suggest that LLMs can automatically tune public health allocation policies using only language-described preferences. Ultimately, while human input may be required to overcome ambiguities of language prompts, DLM allows a human decision-maker to monitor state-feature distributions and iteratively provide expert opinion to guide LLM-proposed policies.

\noindent\textbf{LLMs as feature extractors for reward proposal in resource allocation tasks}: We find LLMs can accurately extract relevant features from language-described resource allocation preferences. We demonstrate these results in Table \ref{tab:reward_reflection_metrics}. LLM-proposed rewards have consistently high precision and recall, effectively capturing the features used in the Base reward function. We find in some cases, for instance Task 15 (Technically Challenged), that ambiguous prompts (see Appendix Table \ref{tab:full_results_table}) may lead to lower recall ability. Interestingly, however, we still observe reasonable performance in these cases (\autoref{fig:results_main}), suggesting that DLM may identify potentially correlated substitute features even for challenging prompts. Note, additionally, that effective feature extraction does not necessarily imply the correct usage of features; however, the findings suggest that LLMs are effective at the first critical ``filtering" step of reward design, particularly in our task of public health resource allocation.

\vspace{-5pt}
\subsection{Limitations}
\vspace{-5pt}
We test our method in a purely simulated environment; any potential consideration of real-world deployment would require comprehensive field testing after approvals from relevant ethics boards. Additionally, we test with prompts in English; further testing in local (Indian) languages is required. Furthermore, whereas we present results on DLM with Chain-of-Thought (CoT) reasoning in Table \ref{tab:cot_results}, indicating no significant improvement due to CoT, we recognize that additional LLM reflection techniques have the potential to improve DLM. Finally, in any potential deployment of the proposed method, we expect that health experts should monitor state-feature distributions to intervene on proposed policies, ensuring safety and proper disambiguation of unclear input prompts.

\vspace{-5pt}
\subsection{Ethical Considerations in the Use of DLM}
\vspace{-5pt}
Extending our discussion of the ethical considerations we take in developing DLM (Appendix \ref{sec:social_impact_statement}, \ref{sec:description_of_dataset}, \ref{sec:consent_for_dataset}), we further consider the broader implications of algorithmic resource allocation, particularly in public health. We note the need for mitigating data bias in the health setting, especially to minimize harmful discrimination for underrepresented groups \cite{lane2017equity}, 
and to avoiding data bias by enabling participatory design \cite{rajkomar2018ensuring}, all key considerations we make in collaboration with our partner NGO for this work. We further highlight the importance of democratized decision-making criteria for algorithmic allocation techniques \cite{guindo2012efficacy,daniels2000accountability}, as well as the importance of complete beneficiary autonomy through guaranteed consent and the opportunity to deny allocations \cite{ransom2017allocation}, as we do in this work.  Finally, one extension of this work that future studies may consider is to directly incorporate prior work in fairness guarantees in resource allocation settings \cite{li2022efficient,wang2024online,vermagroup,verma2024balancing}.



\vspace{-5pt}
\section{Conclusion}
\vspace{-5pt}
In this paper we introduce a Decision-Language Model (DLM) for resource allocation in public health, which can take language prompts describing complex, potentially ambiguous public health policy goals as input, and generate downstream control policies specialized for such goals. We uniquely demonstrate, beyond existing work in LLMs for public health, the strength of LLMs to shape \textit{principled, RMAB-based resource allocation strategies}, potentially enabling rapid community-driven policy adjustment in critical, resource-constrained public health settings. For all future work considering potential deployment, DLM enables health experts to monitor state-feature distributions to intervene on proposed policies and guide LLM generation to ensure safety. 

\begin{ack}

This work was supported by the Harvard Data Science Initiative.
\end{ack}

\bibliographystyle{unsrt}
\bibliography{neurips_2024}

\newpage
\appendix

\section*{Appendix}

\section{Social Impact Statement} 
\label{sec:social_impact_statement}

The presented methods carries a positive societal impact, allowing non-profit organizations in the public health domain to efficiently allocate limited health resources to large beneficiary populations, and to effectively adapt policy outcomes. We illustrate the application of our method in collaboration with an India-based public health organization promoting preventive care for pregnant mothers, that currently relies on RMAB policies for ensuring spread of important health information to low resource populations. 

Our proposed methods, tested purely as a technology demonstration and entirely in simulation, do not have direct negative societal implications. However, reinforcement learning agent training should be done responsibly, especially given the safety concerns associated with agents engaging in unsafe or uninformed exploration strategies. While the public health domain we considered does not have concerns associated with extreme environments, when adapting our proposed methods for other domains, ensuring a robust approach to training reinforcement models is critical. All work in simulation was conducted in collaboration with ARMMAN's team of ethics, and any future consideration of deployment would require a real-world evaluation together with the domain partner, which may reveal further challenges to be addressed. 

\section{ARMMAN Dataset Description} 
\label{sec:description_of_dataset}
To conduct the secondary analyses presented in this study, we use a real dataset collected by ARMMAN (2022). This dataset was created from a comprehensive quality improvement study carried out by ARMMAN in 2022; the study involved 7,668 beneficiaries who were enrolled over a study period of 7 weeks. Throughout this period, 20\% of participants were allocated at least one direct interaction, such as a live service call from a healthcare volunteer. Engagement during this period was used to compute transition dynamics. 

\subsection{Associated Features and Protected Information}
Each beneficiary enrolled in the ARMMAN study was described by 43 features, which include data on age, income, education, preferred call times, and ownership of a phone, among other factors. These features are entirely anonymized. ARMMAN, through the mobile voice call service initiative, works specifically with socially disadvantaged populations. However, \textit{ARMMAN does not collect constitutionally protected and sensitive categories such as cast and religion} in the recorded beneficiary features. In addition to such categories not being available in the ARMMAN dataset, we further ignore already-anonymized information regarding pregnancy history in our study to focus primarily on broader socio-economic categorizations such as age and income. We additionally worked with ARMMAN to ensure that the evaluated prompts, tested only in simulation, are closely aligned with health expert goals and challenging real-world scenarios where prioritizing specific underrepresented subpopulations is desired. 

\section{Dataset Consent for Collection and Analysis}
\label{sec:consent_for_dataset}
We provide further information regarding consent for data collection and analysis below. 

\subsection{Secondary Analysis}
In this study, we conduct a \textit{secondary analysis} of the ARMMAN dataset. We use trajectories, which contain the historical engagement behavior for beneficiaries enrolled in the ARMMAN study. Engagement states are computed as binary values, which are determined as ``engaged" if a beneficiary listens to a voice message for more than 30 seconds. The average length of each voice message is 115 seconds. For each beneficiary, we utilize \textit{only an estimated transition dynamic matrix}, computed using the historical trajectories for $T = 7$ weeks. We only interface with this estimated transition matrix per beneficiary; for each sampled arm then, we observe only the provided transition matrix and the associated, anonymized, arm features. We then use these estimated transition matrices to \textit{simulate} agent behavior; we \textit{do not deploy the proposed method to the ARMMAN program}. These described secondary experiments are conducted completely in simulation using estimated transition matrices to simulate plausible real-world agents. 

\subsection{Data Collection Consent}
Full consent for data collection is obtained from each ARMMAN study participant; the process of data collection is explained in detail to each participant when obtaining consent, \textit{prior to data collection}. The dataset was completely anonymized by ARMMAN, and data exchange from ARMMAN to the researchers was regulated through clearly defined exchange protocols including anonymization, read-only researcher access, restricted use of data for research purposes only, and approval by the ARMMAN ethics review committee. 

\subsection{Simulated Findings and Equal Distribution of Resources}
We note again that all experiments presented in this work were conducted in simulation; we attempt to simulate a plausible real-world public health setting to gain insight into strengths and weaknesses of the proposed approach. We additionally note that the intended goal of this research is to identify techniques to more easily align algorithmic resource allocation methods to specific goals of public health experts. However, the service call program is always equally accessible to all beneficiaries (e.g., participants can always request service calls via a free missed call service) and health information disseminated through the program will never be restricted for any individual enrollee. \textit{That is,  all participants will receive the same weekly health information by automated message regardless of whether they are scheduled to receive service calls or not. }The proposed system is intended for use \textit{only in cases where separate, additional service call resources are available} to help especially underrepresented groups. In these cases, the proposed system may help align algorithmic resource allocation of the \textit{separate} additional resources more closely to desired public health expert policy goals. 

\section{Outcome Analysis Algorithm}
\label{app:outcome_analysis}
Below, we provide the algorithm for the state-feature distribution analysis used in the DLM reflection stage to select top candidate rewards. Please see \autoref{sec:reflection_description} for more details on the Reflection module and state-feature analysis procedure.

\vspace{-10pt}
\begin{center}
\begin{minipage}{\linewidth} 
\begin{algorithm}[H]
\footnotesize
\caption{\textsc{OutcomeAnalysis}}
\begin{algorithmic}[1]
\State \textbf{Input: } Trained policy, critic net. $\theta, \Phi$, action-charge $\lambda$, feature matrix $\mathbf{Z} \in \mathbb{R}^{N \times m}$, \textbf{Return:} $\text{OutString}$
\State \textbf{Hyperparameters: } Sim. steps $n_\text{s}$
\State \colorbox{comment}{\small \textit{\#\# Step 1: Simulate $n_\text{s}$ timesteps under $\theta$}} 
\State Init. evaluation totals $\mathbf{S} \gets [0] \times N$, percentage distribution \( \mathbf{P} \in \mathbb{R}^{\left|\mathbf{G}\right|} \), \( \text{OutString} \) as an empty string
\For{timestep $t = 1$ to $n_{\text{s}}$}
    \State Sample actions: $a_n \sim \theta(s_n, \lambda) \quad \forall n \in [N]$, simulate actions: $\mathbf{s'}, \mathbf{r} = \text{Simulate}(\mathbf{s}, \mathbf{a}, R_i, \mathbf{z})$, 
    \State accumulate states: $\mathbf{S} \gets \mathbf{S} + \mathbf{s}$, update states: $\mathbf{s} \gets \mathbf{s'}$
\EndFor
\State \colorbox{comment}{\small \textit{\#\# Step 2: Compute state outcome distributions}} 
\State \( \mathbf{G} \gets \) Define feature groups within \( \mathbf{Z} \)
\For{each group \( g \in \mathbf{G} \)}
    \State Compute state distribution over feats: \( \mathbf{P}(g) \gets \frac{\sum_{i \in g} \mathbf{S}_i}{\sum \mathbf{S}}  \), append to \( \text{OutString} \): ``\( g \): \( \mathbf{P}(g) \% \) of reward;\ "
\EndFor
\end{algorithmic}
\label{alg:outcome_analysis}
\end{algorithm}
\end{minipage}
\end{center}

\section{Compute Resources}
\label{sec:compute_resources}
We use the Gemini Pro LLM model \cite{team2023gemini} from Google to generate reward functions. We use 8 CPUs cores (Intel Cascade Lake Processors) for all tasks. 

\section{Deep RL for RMABs}\label{sec:deep_rl_for_rmabs}

The Lagrangian relaxation allows us to disentangle the learning of arm policies and an action charge $\lambda$. To learn arm policies, it is common to use RL approaches such as tabular Q-learning, deep Q-learning, or policy gradient algorithms\cite{nakhleh2021neurwin}. In this paper, we use PPO \cite{schulman2017proximal}, a policy gradient algorithm with an actor-critic structure. Specifically, the Q-function can be updated as follows:

\begin{align*}
Q^t_n(s_n,a_n)\leftarrow &(1-\alpha_q) Q_n^{t-1}(s_n,a_n)  \\
&+\alpha_q \left(R(s_n)-\lambda c_{a_n}, \beta \max_a Q(s_n',a)\right),
\end{align*}

where $\alpha_q$ is the learning rate. For an actor-critic methods, after performing an update of the Q-function in the critic, one may compute the advantage estimate: 

$$
A\left(\mathbf{s}, \mathbf{a}\right)=Q\left(\mathbf{s}, \mathbf{a}\right)-V\left(\mathbf{s}\right).
$$ 

In practice, the values $A(s,a), Q(s,a), V(s)$ will be evaluated based on the current policy $\pi$. Intuitively, the advantage estimate tells us how much better $\mathbf{a}$ is compared to the current policy $\pi$. The advantage estimate is then plugged into the policy gradient to update the actor. To learn an action charge $\lambda$, we follow an updating rule \cite{killian2022restless}:

\begin{align*}
\Lambda_t \leftarrow& \Lambda_{t-1} - \alpha_{\Lambda}\left(\frac{B}{1-\beta} +\sum_{n=1}^N\sum_{t=0}^H\mathbb{E} \beta^t c_{n,t}\right),
\end{align*}

where $c_{n,t}$ is the cost of the action taken by the optimal policy on arm $n$ in round $t$, and $\alpha_{\Lambda}$ is the step size.  

\section{Proof of Proposition \ref{prop:transformer-gridsearch-bound}}\label{app:proof}

Here we give a proof of Proposition \ref{prop:transformer-gridsearch-bound}.

\textbf{Proof sketch}. To prove this, we first show that there exists parameters $\theta$ which can implement the optimization step of a simple and efficient gridsearch algorithm \textsc{IteratedLineSearch} in the discretized weight space. We can do this by leveraging previous work connecting transformers and traditional algorithms \cite{lindner2023tracr,akyurekWhatLearningAlgorithm2022,weissThinkingTransformers2021}. Specifically, the RASP \cite{weissThinkingTransformers2021} programming language and TRACR compiler \cite{lindner2023tracr} enables compiling a program into transformer weights, giving a direct proof of existence by construction. We then prove the correctness of this algorithm using the monotonicty property. Finally, we prove upper bounds on the runtime of this algorithm, yielding the desired sample complexity bound in Proposition \ref{prop:transformer-gridsearch-bound}. 

First we state our gridsearch algorithm \textsc{IteratedLineSearch}. In our notation, $w_i$ is the entry of $w \in \mathbb{R}^n$ corresponding to index $i \in [n]$. Recall that there is a total of $|\mathsf{Supp}(w^{*})|$ weights corresponding to the non-zero entries of $w^{*}$, and that we are searching over $2K$ possible values for each weight. This means in total there are $(2K)^{|\mathsf{Supp}(w^{*})|}$ total possible weight combinations.

\begin{center}
\begin{minipage}{\linewidth}
\begin{algorithm}[H]
\footnotesize
\caption{\textsc{IteratedLineSearch}}
\begin{algorithmic}[1]
\State \textbf{Input: } Discretized grid of possible weights $S$, discretization level $\alpha$, evaluation function $V(w) := V(\cdot,w^{*})$, support $\mathsf{Supp}(w^{*})$.
\State \textbf{Return:} Optimized weights $w$ maximizing $V(w)$.
\State \colorbox{comment}{\small \textit{\#\# Step 1: Set $w$ to min value}} 
\State $w_i,w^{\prime}_i \leftarrow \min S \quad \forall i \in \mathsf{Supp}(w^{*})$
\State i = 0
\State converged = \textbf{False}
\While{not converged}
    \State \colorbox{comment}{\small \textit{\#\# Step 2: Call external oracle V(w)}} 
    \State $V_w, V_{w'}, \leftarrow V(w,w^{*}), V(w^{\prime},w^{*}) $
    \State $w_i, w_i'$, converged, $i \leftarrow$ \textsc{OptimizeStep}($S, w, w', V_w, V_w', \alpha, i$)    
    \EndWhile

\State \colorbox{comment}{\small \textit{\#\# Step 3: Return $w$ as output}} 
\State $w^{\mathsf{out}} \leftarrow w$
\State \Return $w^{\mathsf{out}}$
\end{algorithmic}
\label{alg:iterated_line_search}
\end{algorithm}
\end{minipage}
\end{center}

We now state the actual optimization subroutine of $\textsc{IteratedLineSearch}$, $\textsc{OptimizeStep}$.

\begin{center}
\begin{minipage}{\linewidth}
\begin{algorithm}[H]
\footnotesize
\caption{\textsc{OptimizeStep}}
\begin{algorithmic}[1]
\State \textbf{Input: } $S^{|\mathsf{Supp}(w_0^{*})|}$, $w$, $w^{\prime}$, $V(w,w^{*})$, $V(w^{\prime},w^{*})$, $\alpha$, $i$
\State \textbf{Return:} Updated weights $w$, $w'$, where $V(w', w^*) > V(w, w^*)$, convergence status, and current index $i$.
\If{$v(w') > v(w)$ \textbf{and} $w_i < \max S$}
    \State $w \leftarrow w'$ \quad \colorbox{comment}{\small \textit{\#\# update $w$ to $w'$ if it has a higher value}}
    \State $w'_i \leftarrow w_i \alpha$
    \State \Return $w$, $w'$, \textbf{False}, $i$ \quad \colorbox{comment}{\small \textit{\#\# return $w$, $w'$ and $converged=False$, $i$}}
\Else
    \State $i \leftarrow i + 1$
    \If{$i > \|w^{*}\|_0$} 
        \State \Return $w$, $w'$, \textbf{True}, $i$ \quad \colorbox{comment}{\small \textit{\#\# return $w$, $w'$ and $converged=True$, $i$}}
    \Else
        \State \Return $w$, $w'$, \textbf{False}, $i$ \quad \colorbox{comment}{\small \textit{\#\# return $w$, $w'$ and $converged=False$, $i$}}
        
    \EndIf
\EndIf
\end{algorithmic}
\label{alg:optimize_step}
\end{algorithm}
\end{minipage}
\end{center}

\begin{proposition}
Given oracle access to a valuation function $V(w,w^{*})$, \textsc{OptimizeStep} can be implemented in a transformer with constant depth and $O(\|w^{*}\|_0K)$ width.    
\end{proposition}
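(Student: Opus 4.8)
The plan is to exhibit, by construction, a transformer whose forward pass reproduces exactly the map computed by one call to \textsc{OptimizeStep}, and then read off the depth and width from the construction. Following the proof-sketch strategy, I would route everything through the RASP programming language and the TRACR compiler \cite{weissThinkingTransformers2021,lindner2023tracr}: since TRACR turns any RASP program into an explicit set of transformer weights that computes it, it suffices to express \textsc{OptimizeStep} as a RASP program built from a \emph{constant} number of primitive operations, and to verify that the residual-stream encoding it requires has width $O(\|w^{*}\|_0 K)$.

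First I would fix the encoding. I represent each of the $\|w^{*}\|_0$ active weights in log-space, storing the exponent $k \in \{-K,\dots,K\}$ of $w_i = \alpha^{k}$ as a categorical (one-hot) feature over the $2K+1$ grid values; this is the native scheme TRACR uses for finite-valued quantities, and it is precisely what forces the width to scale as $O(\|w^{*}\|_0 K)$, since the weight vector over the support occupies $\|w^{*}\|_0$ blocks of size $O(K)$. The scalars $V(w,w^{*})$, $V(w',w^{*})$, the index $i$, and the two boolean flags each occupy $O(1)$ additional coordinates.

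Next I would decompose \textsc{OptimizeStep} into RASP primitives. The two guards $V(w')>V(w)$ and $w_i<\max S$ are elementwise numeric comparisons, each producing a boolean, and their conjunction is a single multiplication of booleans. The branch that sets $w\leftarrow w'$ is a select-and-copy between two residual blocks, and the update $w_i'\leftarrow w_i\alpha$ becomes, in the log-space encoding, a \emph{shift} of the one-hot exponent at coordinate $i$ by one position, implementable by a single indexed selector keyed on $i$. The else-branch performs the increment $i\leftarrow i+1$ (another index shift) and the comparison $i>\|w^{*}\|_0$, and the final output multiplexes the two branches using the already-computed boolean guard. Each of these is a constant number of RASP \texttt{select}/\texttt{aggregate}/elementwise steps, and the total count is independent of $\|w^{*}\|_0$ and $K$, so TRACR compiles the whole program to a transformer of constant depth $D$.

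The step I expect to be the main obstacle is the indexed in-place update: modifying \emph{only} the $i$-th coordinate of the weight vector (both the restricted copy and the scaling $w_i'\leftarrow w_i\alpha$) within a bounded number of attention layers, while keeping the one-hot exponent representation intact so that ``multiply by $\alpha$'' is realized as a clean categorical shift rather than an arithmetic operation that would inflate the width. I would handle this by implementing the update as an attention head whose query is the current index $i$ and whose value writes the shifted one-hot block back to position $i$, and then verify that this head, together with the comparison and multiplexing MLP layers, fits within the claimed constant depth and $O(\|w^{*}\|_0 K)$ width.
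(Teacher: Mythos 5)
Your proposal is correct and follows essentially the same route as the paper's proof: a by-construction argument via RASP (compiled through TRACR), with the discretized weight grid stored in the hidden state to give width $O(\|w^{*}\|_0 K)$, the conditionals on lines 3 and 9 of \textsc{OptimizeStep} realized as select-aggregate (attention) blocks, and a constant number of such operations yielding constant depth. Your treatment of the indexed update $w_i' \leftarrow w_i \alpha$ as a one-hot shift in log-space is in fact more explicit than the paper's four-layer accounting, but it is a refinement of, not a departure from, the same construction.
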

\begin{proof}
    Using the RASP programming language \cite{weissThinkingTransformers2021}, we show that \textsc{OptimizeStep} can be implemented in a transformer. In each forward pass of the transformer, we take one new weight combination and pass this to the loss function in order to evaluate its performance on line 8 of Algorithm \ref{alg:optimize_step}. Note that the only state needed to be kept track of between forward passes of the transformer is the current hyperparameter dimension $i$, $\alpha$, $S^{|\mathsf{Supp}(w_0^{*})|}$, $w$, $w^{\prime}$, $V(w,w^{*})$, $V(w^{\prime},w^{*})$. This information can be stored in $nK$ hidden dimensions, as the size of $S^{|\mathsf{Supp}(w_0^{*})|}$ upper bounds everything else as $nK$. The conditional statements on line 3 and line 9 of Algorithm \ref{alg:optimize_step}can be implemented in two select-aggregate blocks  \cite{weissThinkingTransformers2021},  corresponding to two attention layers. Line 8 requires an additional fully connected layer after the first attention layer, and a finally an output layer is needed to implement each return statement, giving us four layers in total. The select block roughly corresponds to the attention matrix in the transformer, and is a function that allows calculating conditionals by taking in three arguments: keys, queries, and a boolean predicate $p$. Here, we would pass $\alpha$, $S^{|\mathsf{Supp}(w_0^{*})|}$, $w$, $w^{\prime}$, $V(w,w^{*})$, $V(w^{\prime},w^{*})$ as the input sequence, and construct three selectors corresponding to the boolean predicates on line 5, 6, and 8, respectively. An aggregate can then be performed to get the result of the predicate, controlling the program flow for selecting.
\end{proof}

\begin{proposition}\label{prop:linesearch}
\textsc{IteratedLineSearch} converges in $O(\|w^{*}\|_0K)$ time.
    
\end{proposition}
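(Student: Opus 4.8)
The plan is to bound the number of iterations of the \texttt{while} loop in \textsc{IteratedLineSearch}, since each iteration invokes the oracle a constant number of times (twice, for $V_w$ and $V_{w'}$), so the sample complexity is proportional to the iteration count. First I would argue that the algorithm optimizes the $|\mathsf{Supp}(w^{*})|$ active coordinates one at a time, in the cyclic order induced by the index pointer $i$. For a fixed coordinate $i$, the subroutine \textsc{OptimizeStep} performs a one-dimensional line search over the grid $S = \{\alpha^k : -K \leq k \leq K\}$: starting from $w_i = \min S = \alpha^{-K}$, it repeatedly compares $w$ against $w'$, which differs only in the $i$-th coordinate scaled up by one grid step ($w'_i = w_i \alpha$). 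By monotonicity of $V(\cdot, w^{*})$, the value $V$ increases as $w_i$ moves toward $w^{*}_i$ and decreases once it passes it, so the one-dimensional restriction is unimodal; the line search walks upward along the grid until $V_{w'} \leq V_w$ (or the boundary $\max S$ is reached), at which point it stops advancing that coordinate and increments $i$.

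Next I would count the work. Each coordinate is scanned over at most $2K{+}1 = O(K)$ grid points before the local comparison terminates, contributing $O(K)$ iterations. Since there are $\|w^{*}\|_0 = |\mathsf{Supp}(w^{*})|$ coordinates, and the index $i$ advances through all of them exactly once before the convergence flag is set (line 9 triggers \texttt{True} once $i > \|w^{*}\|_0$), the total iteration count is $O(\|w^{*}\|_0 K)$. Multiplying by the constant number of oracle calls per iteration yields the claimed $O(\|w^{*}\|_0 K)$ sample complexity, matching Proposition~\ref{prop:transformer-gridsearch-bound}. I would then connect this back to correctness: because the search over each coordinate is exact on the grid and monotonicity guarantees coordinatewise optimization suffices, the returned $w^{\mathsf{out}}$ coincides with $\hat{w} = \arg\max_{w \in S^{|\mathsf{Supp}(w^{*})|}} V(w, w^{*})$.

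The main obstacle I anticipate is justifying that coordinatewise (greedy) line search actually finds the \emph{global} grid maximum $\hat{w}$, not merely a coordinatewise local optimum. The monotonicity assumption as stated is a strong product-structure condition: $V(u,w^{*}) \leq V(v,w^{*})$ whenever $|u_i - w^{*}_i| \geq |v_i - w^{*}_i|$ for \emph{every} coordinate $i$ simultaneously. I would lean on this to show that the per-coordinate optimum is independent of the other coordinates' current values --- that is, moving $w_i$ closer to $w^{*}_i$ never hurts regardless of $w_{-i}$ --- so that a single sweep over the coordinates, each optimized to the grid point nearest $w^{*}_i$, reaches $\hat{w}$. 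The subtle point to verify carefully is that no second pass is needed: since each coordinate's optimal grid value does not depend on the others under this monotonicity, one cyclic pass suffices, which is exactly why the $i$-pointer terminates after $\|w^{*}\|_0$ increments rather than looping. I would make this independence explicit, as it is the crux that turns an $O(K^{\|w^{*}\|_0})$ exhaustive search into the linear $O(\|w^{*}\|_0 K)$ bound.
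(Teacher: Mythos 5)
Your proof is correct and takes essentially the same route as the paper: the paper's argument is exactly your counting step, namely that each iteration of the \texttt{while} loop either increments the pointer $i$ (which never decreases and takes at most $\|w^{*}\|_0$ values) or advances $w'_i$ by one grid step (at most $O(K)$ values per coordinate), giving $O(\|w^{*}\|_0 K)$ iterations with a constant number of oracle calls each. The additional material you flag as the ``main obstacle''---that monotonicity makes each coordinate's optimal grid value independent of the others, so one sweep returns $\hat{w}$---is correct but is not part of this proposition; in the paper it constitutes the separate correctness argument in the proof of Proposition~\ref{prop:transformer-gridsearch-bound}, carried out coordinatewise via the contrapositive of monotonicity.
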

\begin{proof}
Inside each iteration of the while loop, line 10 either $i$ increases to $i+1$ (but never decreases) or $w^{\prime}_i$ increases to $w^{\prime}_i\alpha$. There are $2K$ possible values for $w^{\prime}_i$ for every given $i$. There are $\|w^{*}\|_0$ possible values for $i$ and $2K$ possible values for $w^{\prime}$, ensuring that the algorithm converges in $O(\|w^{*}\|_0K)$ time.  
\end{proof}

\begin{proof}[Proof of Proposition~\ref{prop:transformer-gridsearch-bound}]
    We consider iterated line search (Algorithm~\ref{alg:iterated_line_search}) above with search space $S$ and scale $\alpha$. In Proposition~\ref{prop:linesearch}, we have shown that this algorithm converges in time $O(nK)$ and that this algorithm can be implemented by a transformer. It remains to show that it returns $\hat{w}:= \arg\max_{w \in S^{|\mathsf{Supp}(w^{*})|}}V(w,w^{*})$. 

    By monotonicity, it is clear that if the final output is such that $w^{\mathsf{out}}_i = \arg\min_{g \in S}|g- (w^{*})_i|$ for every $i \in \mathsf{Supp}(w^{*})$, then $w^{\mathsf{out}} = \hat{w}$. Therefore, we prove the coordinate-wise optimality below. 
    
    Consider the value of $w$ in the while loop, line 10 when $i$ is increased to $i+1$ by the \textsc{OPTIMIZESTEP} function. 
    
    \textbf{Case 1:} Suppose the coordinate $w_{i} = \max S$  then we have $V(w,w^{*}) \geq V(\bar{w},w^{*})$ for every $\bar{w}$ such that $\bar{w}_{j} = w_{j}$ for $i \neq j$ and $\bar{w}_i \in S$. Therefore, by the contrapositive of the monotonicity property, this implies $w_{i} = \arg\min_{g \in S}|g- (w^{*})_{i}|$. This establishes optimality for the co-orindate $i$ since $w^{\mathsf{out}}_i = w_{i}$. 
    
    \textbf{Case 2:} Now suppose that $w_{i} < \max S$.  A similar argument as above establishes that $|w_{i}-w^{*}_i| \leq |g-w^{*}_i|$ for every $g \in S$ such that $g \leq w^{i}$. The loop breaks since the condition in line 8 is false. Again, by the contrapositive of the monotonicity property, we must have $|w_{i}-w^{*}_i| \leq  |\alpha w_{i}-w^{*}_i|$. This implies that $w^{*}_i < \alpha w_{i}$ and hence $|w_{i}-w^{*}_i| \leq  |\alpha w_{i}-w^{*}_i| \leq |\alpha^h w_{i}-w^{*}_i|$ for all $h \in \mathbb{N}$. Thus, we demonstrate that even in this case  $w_{i} = \arg\min_{g \in S}|g- w^{*}_{i}|$.

   Therefore, coordinate-wise optimality holds for every $i \in \mathsf{Supp}(w^{*})$, completing the proof of the claim.     
\end{proof}

\section{Recall of Logical Combinations}\label{app:logical-analysis}

We analyze the logical reasoning ability of the LLM to combine atomic features together in order to recover the ground-truth reward in \autoref{tab:logic_recall}.

\begin{table}[ht!]
\caption{Recall of logical combinations of features for multi-feature prompts. We consider multi-feature prompts 4-15, and report the recall compared to Base reward for accurately emulating behavior of Base reward. Note that we consider only LLM generations with high feature recall, i.e. those proposed rewards that include, at minimum, the features used in the corresponding ground truth Base reward.}
\label{tab:logic_recall}
\centering
\scalebox{.8}{
\begin{tabular}{lllllll}
\toprule
\textbf{Task} & 4 & 5 & 6 & 7 & 8 & 9 \\ \midrule
\textbf{Logic Recall} & 0.85 ± 0.017 & 0.75 ± 0.023 & 0.69 ± 0.037 & 0.69 ± 0.027 & 0.84 ± 0.043 & 0.66 ± 0.04 \\ \toprule
\textbf{Task} & 10 & 11 & 12 & 13 & 14 & 15 \\ \midrule
\textbf{Logic Recall} & 0.64 ± 0.053 & 0.85 ± 0.014 & 0.87 ± 0.013 & 0.9 ± 0.01 & 0.87 ± 0.011 & 0.62 ± 0.102
\end{tabular}
}
\end{table}

\section{Sample Input Prompt}
\label{sec:sample_prompt}
A sample input prompt with full context inputs is shown below. 

\definecolor{codegreen}{HTML}{aae6ac}
\definecolor{backcolour}{rgb}{0.95,0.95,0.92}
\lstdefinestyle{mystyle}{
    backgroundcolor=\color{backcolour},   
    commentstyle=\color{codegreen},
    basicstyle=\ttfamily\scriptsize,
    breakatwhitespace=false,         
    breaklines=false,                 
    captionpos=b,                    
    keepspaces=true,                 
    showspaces=false,                
    showstringspaces=false,
    showtabs=false,                  
    tabsize=2
}
\lstset{style=mystyle}

\begin{lstlisting}[escapechar=!]  
Create a Python reward function for RL in phone call resource allocation to 
mothers in India, with the objective of prioritizing higher states and: While 
still prioritizing all, slightly focus on the oldest by age distribution.. The 
function should use 'state' (value is either 0,1) and features 'agent_feats' 
(length 43 array) to direct the RL agent. Here is a description of the features 
you may use: 
Index Name DataType
[sensitive feature hidden]
 7. Ages 10-20 - Binary
 8. Ages 21-30 - Binary
 9. Ages 31-40 - Binary
 10. Ages 41-50 - Binary
 11. Ages 51-60 - Binary
 12. Speaks Hindi - Binary
 13. Speaks Marathi - Binary
 14. Speaks Gujurati - Binary
 15. Speaks Kannada - Binary
 16. Education level 1/7 -- illiterate - Binary
 17. Education level 2/7 -- 1-5th Grade Completed - Binary
 18. Education level 3/7 -- 6-9th Grade Completed - Binary
 19. Education level 4/7 -- 10th Grade Passed - Binary
 20. Education level 5/7 -- 12th Grade Passed - Binary
 21. Education level 6/7 -- Graduate - Binary
 22. Education level 7/7 -- Post graduate - Binary
 23. Phone owner 0 (e.g., woman) - Binary
 24. Phone owner 1 (e.g., husband) - Binary
 25. Phone owner 2 (e.g., family) - Binary
 26. To be called from 8:30am-10:30am - Binary
 27. To be called from 10:30am-12:30pm - Binary
 28. To be called from 12:30pm-3:30pm - Binary
 29. To be called from 3:30pm-5:30pm - Binary
 30. To be called from 5:30pm-7:30pm - Binary
 31. To be called from 7:30pm-9:30pm - Binary
 32. NGO - Binary
 33. ARMMAN - Binary
 34. PHC - Binary
 35. Income bracket -1 (no income) - Binary
 36. Income bracket 1 (e.g., 0-5000) - Binary
 37. Income bracket 2 (e.g., 5001-10000) - Binary
 38. Income bracket 3 (e.g., 10001-15000) - Binary
 39. Income bracket 4 (e.g., 15001-20000) - Binary
 40. Income bracket 5 (e.g., 20001-25000) - Binary
 41. Income bracket 6 (e.g., 25001-30000) - Binary
 42. Income bracket 7 (e.g., 30000-999999) - Binary
 Your task:
1. Write a simple, single-line Python reward function. Exclude the word 'return'
and exclude non-standard libraries. Format your code with triple $ signs: 
$$$[YOUR FUNCTION]$$$. 
2. Provide an explanation on how this function prioritizes the specified age 
group. Format your explanation with triple % signs: %%%[YOUR EXPLANATION]%%%. 
Note that HIGHER states are always preferred, so ensure reward increases as 
state increases. Make sure reward is always positive and increasing with state. 
Avoid using bitwise operators &, |. Using and, or instead. 
Example Prompt: While prioritizing all, emphasize agents that are both older and
richer 
Let's think about this step by step. We want to give reward only for agents that
are older, which corresponds to feature 11, and rich which corresponds to 
feature 42. This corresponds to a condition of (agent_feats[11] and 
agent_feats[42]). In addition, we always only want to give reward when the state
is 1, since the agent gets reward only when it is in a listening state. 
Therefore, our reward function should be: state * (agent_feats[11] and 
agent_feats[42]).
Example Response:
Python Code: '$$$ state * 0.1 + 2 * state * (agent_feats[11] and 
agent_feats[42]) $$$'
Explanation: %%%This function gives higher rewards for higher states and higher 
ages, aligning with the goal to reward older individuals with higher states.%%% 
Come up with a unique new reward for the specified goal: While still 
prioritizing all, slightly focus on the oldest by age distribution.. Here are 
your best previous attempts: 

\end{lstlisting}

\section{Prompts and Base Reward Functions}

We provide a full list of the tested prompts and the ground truth Base reward functions in Table \ref{tab:tasks_list}. We additionally include a full list of the features available for each arm (and used within proposed reward functions) in Figure \ref{fig:feature_list}. 

\begin{figure}[htb]
    \centering
    \begin{subfigure}[t]{1\linewidth}
        \begin{tabular}{c}
\begin{lstlisting}
[sensitive features 0-6 hidden, not used in prompts]
 7. Ages 10-20 - Binary
 8. Ages 21-30 - Binary
 9. Ages 31-40 - Binary
 10. Ages 41-50 - Binary
 11. Ages 51-60 - Binary
 12. Speaks Hindi - Binary
 13. Speaks Marathi - Binary
 14. Speaks Gujurati - Binary
 15. Speaks Kannada - Binary
 16. Education level 1/7 -- illiterate - Binary
 17. Education level 2/7 -- 1-5th Grade Completed - Binary
 18. Education level 3/7 -- 6-9th Grade Completed - Binary
 19. Education level 4/7 -- 10th Grade Passed - Binary
 20. Education level 5/7 -- 12th Grade Passed - Binary
 21. Education level 6/7 -- Graduate - Binary
 22. Education level 7/7 -- Post graduate - Binary
 23. Phone owner 0 (e.g., woman) - Binary
 24. Phone owner 1 (e.g., husband) - Binary
 25. Phone owner 2 (e.g., family) - Binary
 26. To be called from 8:30am-10:30am - Binary
 27. To be called from 10:30am-12:30pm - Binary
 28. To be called from 12:30pm-3:30pm - Binary
 29. To be called from 3:30pm-5:30pm - Binary
 30. To be called from 5:30pm-7:30pm - Binary
 31. To be called from 7:30pm-9:30pm - Binary
 32. NGO - Binary
 33. ARMMAN - Binary
 34. PHC - Binary
 35. Income bracket -1 (no income) - Binary
 36. Income bracket 1 (e.g., 0-5000) - Binary
 37. Income bracket 2 (e.g., 5001-10000) - Binary
 38. Income bracket 3 (e.g., 10001-15000) - Binary
 39. Income bracket 4 (e.g., 15001-20000) - Binary
 40. Income bracket 5 (e.g., 20001-25000) - Binary
 41. Income bracket 6 (e.g., 25001-30000) - Binary
 42. Income bracket 7 (e.g., 30000-999999) - Binary
\end{lstlisting}
         \end{tabular}
    \end{subfigure}%
    
\caption{Full list of features with corresponding feature indices used in Base and LLM-proposed reward functions. See Table \ref{tab:tasks_list} for a full list of language prompts and ground truth Base reward functions.}
\label{fig:feature_list}
\end{figure}

We note that we evaluate on a wide variety of tasks within the real of resource allocation for maternal health. In particular, we test across feature characteristics such as age, income, spoken languages (an important characteristic indicating ethnicity in the Indian subcontinentuuh), education, and other relevant features that may be considered in real-world allocation tasks. We provide these tasks directly as context to the LLM, in addition to relevant feature characteristics and their corresponding descriptions. Per seed, we fix the randomness in the sampled set of 48 arms from ARMMAN's dataset, the deep RL neural network initialization, batch sampling from the dataset, and the sampling of states for each arm during evaluation. However please note that we cannot control the randomness of the Gemini LLM generation as we only have access through the Google API interface. 

\begin{table}[ht]
\caption{Full task list and ground truth Base reward functions.}
\label{tab:tasks_list}
\centering
\scalebox{.67}{
\begin{tabularx}{1.5\textwidth}{>{\raggedright\arraybackslash}p{0.5cm} >{\raggedright\arraybackslash}p{4.5cm} > {\raggedright\arraybackslash}p{4.5cm} XXX}
\toprule
\textbf{idx} & \textbf{Label} & \textbf{Full Prompt} & \textbf{Base Reward Function} \\
\midrule
0 & Older Bias & While still prioritizing all, slightly focus on the oldest by age distribution. & lambda feature, state: state * 0.1 + if\_(state) * 2.0 * feature{[}11{]} \\
1 & Low Income & While still prioritizing all, slightly focus on the bottom 12.5\% of the income\_bracket distribution. & lambda feature, state: state * 0.1 + if\_(state) * 2.0 * feature{[}36{]} \\
2 & Hindi Speakers & While still prioritizing all, slightly focus on those who speak Hindi. & lambda feature, state: state * 0.1 + if\_(state) * 2.0 * if\_(feature{[}12{]}) \\
3 & Low Education Bias & While still prioritizing all, slightly weight those who have had low education. & lambda feature, state: state * 0.1 + if\_(state) * 2.0 * if\_(feature{[}16{]}) \\
4 & Age Distribution  Tail Emphasis & While still prioritizing all, slightly focus on both the youngest and oldest by age. & lambda feature, state: state * 0.1 + if\_(state) * 2.0 * if\_(feature{[}11{]} or feature{[}7{]}) \\
5 & Middle Income & While still prioritizing all, slightly prefer the income bracket bounds for the middle 40\% of the population. & lambda feature, state: state * 0.1 + if\_(state) * 2.0 * if\_(feature{[}38{]} or feature{[}39{]} or feature{[}40{]}) \\
6 & Non-Phone Owner & While still prioritizing all, slightly favor those women who do not own their own phone. & lambda feature, state: state * 0.1 + if\_(state) * 2.0 * if\_(feature{[}24{]} or feature{[}25{]}) \\
7 & Low Income + Younger Bias & While still prioritizing all, slightly prioritize impoverished younger mothers by combining the distributions of 'age' and 'education'. & lambda feature, state: state * 0.1 + if\_(state) * 2.0 * if\_(feature{[}7{]} and feature{[}16{]}) \\
8 & Marathi Speakers + Middle Aged & While still prioritizing all, slightly focus on those Marathi-speakers with middle-aged mothers. & lambda feature, state: state * 0.1 + if\_(state) * 2.0 * if\_(feature{[}13{]} and (feature{[}9{]} or feature{[}10{]})) \\
9 & Early and  Late Workers & While still prioritizing all, slightly emphasize beneficiaries who likely work early in the morning and late at night. & lambda feature, state: state * 0.1 + if\_(state) * 2.0 * if\_(feature{[}26{]} or feature{[}28{]}) \\
10 & Critical Low Income & While still prioritizing all, slightly weight the lowest income\_bracket groups, the absolute lowest earners in the population. & lambda feature, state: state * 0.1 + if\_(state) * 2.0 * if\_(feature{[}35{]} or feature{[}36{]} or feature{[}37{]}) \\
11 & Early Morning Call + NGO Registered & While still prioritizing all, slightly advantage those who prefer being called before 10:30am 'slot' and are registered at an NGO. & lambda feature, state: state * 0.1 + if\_(state) * 2.0 * if\_(feature{[}26{]} and feature{[}32{]}) \\
12 & Morning Call + NGO Registered & While still prioritizing all, slightly advantage those who prefer being called between 10:30am-12:30pm and are registered at an NGO. & lambda feature, state: state * 0.1 + if\_(state) * 2.0 * if\_(feature{[}27{]} and feature{[}32{]}) \\
13 & Afternoon Call + NGO Registered & While still prioritizing all, slightly advantage those who prefer being called between 12:30pm-3:30pm and are registered at an NGO. & lambda feature, state: state * 0.1 + if\_(state) * 2.0 * if\_(feature{[}28{]} and feature{[}32{]}) \\
14 & Evening Call +  NGO Registered & While still prioritizing all, slightly advantage those who prefer being called after 7PM 'slot' registered at an NGO. & lambda feature, state: state * 0.1 + if\_(state) * 2.0 * if\_(feature{[}31{]} and feature{[}32{]}), \\
15 & Technically Challenged & While still prioritizing all, infer technical challenges in reaching the phone that could indicate 'at-risk' beneficiaries and give slight preference. & lambda feature, state: state * 0.1 + if\_(state) * 2.0 * if\_(feature{[}24{]} or feature{[}25{]})
\end{tabularx}
}
\end{table}

\section{Extended Results and Statistical Tests}

\subsection{Numerical Results}
We show the full set of tasks and numerical results for the mean normalized reward score in \autoref{tab:full_results_table}. 

\begin{table}[ht]
\caption{Full prompts and numerical results.}
\label{tab:full_results_table}
\centering
\scalebox{.67}{
\begin{tabularx}{1.5\textwidth}{>{\raggedright\arraybackslash}p{0.5cm} >{\raggedright\arraybackslash}p{2.5cm} >{\arraybackslash}p{4.5cm} XXXXXXX}
\toprule
\textbf{Idx.} & \textbf{Label} & \textbf{Full Prompt} & \textbf{Base (topline)} & \textbf{No Action} & \textbf{Default} & \textbf{DLM (No Reflection)} & \textbf{DLM (Reflection)} \\ \hline
0 & Older Bias & While still prioritizing all, slightly focus on the oldest by age distribution. & 1.00 & -0.02 ± 0.072 & 0.41 ± 0.130 & 0.91 ± 0.029 & 0.99 ± 0.005 \\
1 & Low Income & While still prioritizing all, slightly focus on the bottom 12.5\% of the income\_bracket distribution. & 1.00 & -0.02 ± 0.073 & 0.51 ± 0.122 & 0.83 ± 0.046 & 0.97 ± 0.014 \\
2 & Hindi Speakers & While still prioritizing all, slightly focus on those who speak Hindi. & 1.00 & -0.12 ± 0.067 & 0.71 ± 0.093 & 0.79 ± 0.058 & 0.98 ± 0.042 \\
3 & Low Education Bias & While still prioritizing all, slightly weight those who have had low education. & 1.00 & -0.00 ± 0.073 & 0.46 ± 0.124 & 0.78 ± 0.063 & 0.74 ± 0.058 \\
4 & Age Distribution Tail Emphasis & While still prioritizing all, slightly focus on both the youngest and oldest by age. & 1.00 & -0.02 ± 0.069 & 0.52 ± 0.111 & 1.00 ± 0.018 & 0.92 ± 0.018 \\
5 & Middle Income & While still prioritizing all, slightly prefer the income bracket bounds for the middle 40\% of the population. & 1.00 & -0.08 ± 0.067 & 0.65 ± 0.097 & 0.77 ± 0.057 & 0.88 ± 0.044 \\
6 & Non-Phone Owner & While still prioritizing all, slightly favor those women who do not own their own phone. & 1.00 & -0.07 ± 0.067 & 0.50 ± 0.114 & 0.87 ± 0.028 & 0.97 ± 0.019 \\
7 & Low Income + Younger Bias & While still prioritizing all, slightly prioritize impoverished younger mothers by combining the distributions of 'age' and 'education'. & 1.00 & 0.03 ± 0.075 & 0.40 ± 0.134 & 0.95 ± 0.014 & 0.98 ± 0.006 \\
8 & Marathi Speakers + Middle Aged & While still prioritizing all, slightly focus on those Marathi-speakers with middle-aged mothers. & 1.00 & -0.00 ± 0.074 & 0.51 ± 0.115 & 0.82 ± 0.036 & 0.89 ± 0.023 \\
9 & Early and Late Workers & While still prioritizing all, slightly emphasize beneficiaries who likely work early in the morning and late at night. & 1.00 & -0.11 ± 0.065 & 0.69 ± 0.093 & 0.81 ± 0.038 & 0.87 ± 0.038 \\
10 & Critical Low Income & While still prioritizing all, slightly weight the lowest income\_bracket groups, the absolute lowest earners in the population. & 1.00 & -0.08 ± 0.068 & 0.60 ± 0.104 & 0.93 ± 0.022 & 0.93 ± 0.015 \\
11 & Early Morning Call + NGO Registered & While still prioritizing all, slightly advantage those who prefer being called before 10:30am 'slot' and are registered at an NGO. & 1.00 & -0.12 ± 0.065 & 0.70 ± 0.085 & 0.87 ± 0.037 & 0.96 ± 0.026 \\
12 & Morning Call + NGO Registered & While still prioritizing all, slightly advantage those who prefer being called between 10:30am-12:30pm and are registered at an NGO. & 1.00 & -0.07 ± 0.066 & 0.52 ± 0.109 & 0.94 ± 0.020 & 0.87 ± 0.025 \\
13 & Afternoon Call + NGO Registered & While still prioritizing all, slightly advantage those who prefer being called between 12:30pm-3:30pm and are registered at an NGO. & 1.00 & -0.11 ± 0.062 & 0.72 ± 0.085 & 0.92 ± 0.023 & 0.91 ± 0.024 \\
14 & Evening Call + NGO Registered & While still prioritizing all, slightly advantage those who prefer being called after 7PM 'slot' registered at an NGO. & 1.00 & -0.06 ± 0.070 & 0.55 ± 0.111 & 0.67 ± 0.057 & 0.90 ± 0.031 \\
15 & Technically Challenged & While still prioritizing all, infer technical challenges in reaching the phone that could indicate 'at-risk' beneficiaries and give slight preference. & 1.00 & -0.14 ± 0.060 & 0.64 ± 0.097 & 0.79 ± 0.049 & 0.92 ± 0.019 \\ \hline
\textbf{Avg.} & \textbf{Average MNR} & Average MNR for given 16 tasks & \textbf{1.00} & \textbf{-0.06 ± 0.017} & \textbf{0.57 ± 0.027} & \textbf{0.85 ± 0.008} & \textbf{0.92 ± 0.006}
\end{tabularx}
}
\end{table}
\clearpage 

\subsection{Statistical Tests}
\begin{table}[ht]
\caption{One-tailed t-tests comparing MNR scores against the null hypothesis that the mean MNR scores are less than or equal to the comparison group.}
\label{tab:result_statistical_tests}
\centering
\scalebox{.67}{
\begin{tabularx}{1.5\textwidth}{>{\raggedright\arraybackslash}p{2.5cm}  XXXXXX}
\toprule
\textbf{Label} & \textbf{Default} & \textbf{DLM (No Reflection)} & \textbf{DLM (Reflection)} & \textbf{DLM (No Reflection) \textgreater Default?} & \textbf{DLM (Reflection) \textgreater Default?} & \textbf{DLM (Reflection) \textgreater DLM (No Reflection)?} \\ \hline
Older Bias & 0.41 ± 0.130 & 0.91 ± 0.029 & 0.99 ± 0.005 & \textbf{p \textless 0.001} & \textbf{p \textless 0.001} & \textbf{p = 0.004} \\
Low Income & 0.51 ± 0.122 & 0.83 ± 0.046 & 0.97 ± 0.014 & \textbf{p = 0.007} & \textbf{p \textless 0.001} & \textbf{p = 0.002} \\
Hindi Speakers & 0.71 ± 0.093 & 0.79 ± 0.058 & 0.98 ± 0.042 & p = 0.233 & \textbf{p = 0.004} & \textbf{p = 0.004} \\
Low Education Bias & 0.46 ± 0.124 & 0.78 ± 0.063 & 0.74 ± 0.058 & \textbf{p = 0.011} & \textbf{p = 0.021} & p = 0.680 \\
Age Distribution Tail Emphasis & 0.52 ± 0.111 & 1.00 ± 0.018 & 0.92 ± 0.018 & \textbf{p \textless 0.001} & \textbf{p \textless 0.001} & p = 0.999 \\
Middle Income & 0.65 ± 0.097 & 0.77 ± 0.057 & 0.88 ± 0.044 & p = 0.144 & \textbf{p = 0.016} & p = 0.064 \\
Non-Phone Owner & 0.50 ± 0.114 & 0.87 ± 0.028 & 0.97 ± 0.019 & \textbf{p \textless 0.001} & \textbf{p \textless 0.001} & \textbf{p = 0.002} \\
Low Income + Younger Bias & 0.40 ± 0.134 & 0.95 ± 0.014 & 0.98 ± 0.006 & \textbf{p \textless 0.001} & \textbf{p \textless 0.001} & \textbf{p = 0.025} \\
Marathi Speakers + Middle Aged & 0.51 ± 0.115 & 0.82 ± 0.036 & 0.89 ± 0.023 & \textbf{p = 0.005} & \textbf{p \textless 0.001} & p = 0.051 \\
Early and Late Workers & 0.69 ± 0.093 & 0.81 ± 0.038 & 0.87 ± 0.038 & p = 0.117 & \textbf{p = 0.037} & p = 0.133 \\
Critical Low Income & 0.60 ± 0.104 & 0.93 ± 0.022 & 0.93 ± 0.015 & \textbf{p = 0.001} & \textbf{p \textless 0.001} & p = 0.500 \\
Early Morning Call + NGO Registered & 0.70 ± 0.085 & 0.87 ± 0.037 & 0.96 ± 0.026 & \textbf{p = 0.034} & \textbf{p = 0.002} & \textbf{p = 0.024} \\
Morning Call + NGO Registered & 0.52 ± 0.109 & 0.94 ± 0.020 & 0.87 ± 0.025 & \textbf{p \textless 0.001} & \textbf{p = 0.001} & p = 0.985 \\
Afternoon Call + NGO Registered & 0.72 ± 0.085 & 0.92 ± 0.023 & 0.91 ± 0.024 & \textbf{p = 0.012} & \textbf{p = 0.016} & p = 0.618 \\
Evening Call +  NGO Registered & 0.55 ± 0.111 & 0.67 ± 0.057 & 0.90 ± 0.031 & p = 0.169 & \textbf{p = 0.001} & \textbf{p \textless 0.001} \\
Technically Challenged & 0.64 ± 0.097 & 0.79 ± 0.049 & 0.92 ± 0.019 & p = 0.085 & \textbf{p = 0.003} & \textbf{p = 0.007}
\end{tabularx}
}
\end{table}
Here we include the full prompt list for each task, base reward functions, as well as the numerical results given in \autoref{fig:results_main}. We show the full set of statistical tests in Table \ref{tab:result_statistical_tests}.

\section{Chain-of-Thought Experiments}
We note that methods that improve LLM reasoning, such as chain-of-thought (CoT), may be incorporated to improve the base LLM reasoning in the DLM pipeline. Below, we include results using chain-of-thought reasoning for our model outputs; while we did not find that CoT significantly improved reward, we nevertheless emphasize the potential for additional LLM techniques to improve DLM outputs. 

\vspace{-5pt}
\begin{table}[ht]
\caption{Full prompts and numerical results \textbf{with chain-of-thought (CoT)} reasoning for DLM.}
\label{tab:cot_results}
\centering
\scalebox{.5}{
\begin{tabularx}{1.5\textwidth}{>{\raggedright\arraybackslash}p{0.5cm} >{\raggedright\arraybackslash}p{2.5cm} >{\arraybackslash}p{4.5cm} XXXXXXX}
\toprule
\textbf{idx} & \textbf{Label} & \textbf{Full Prompt} & \textbf{Base (topline)} & \textbf{No Action} & \textbf{Default} & \textbf{DLM with CoT (No Reflection)} & \textbf{DLM with CoT (Reflection)} \\
0 & Older Bias & While still prioritizing all, slightly focus on the oldest by age distribution. & 1.00 & -0.02 ± 0.072 & 0.41 ± 0.130 & 0.89 ± 0.034 & 0.95 ± 0.023 \\
1 & Low Income & While still prioritizing all, slightly focus on the bottom 12.5\% of the income\_bracket distribution. & 1.00 & -0.02 ± 0.073 & 0.51 ± 0.122 & 0.71 ± 0.062 & 0.62 ± 0.084 \\
2 & Hindi Speakers & While still prioritizing all, slightly focus on those who speak Hindi. & 1.00 & -0.12 ± 0.067 & 0.71 ± 0.093 & 0.79 ± 0.055 & 0.94 ± 0.032 \\
3 & Low Education Bias & While still prioritizing all, slightly weight those who have had low education. & 1.00 & -0.00 ± 0.073 & 0.46 ± 0.124 & 0.79 ± 0.059 & 0.72 ± 0.065 \\
4 & Age Distribution  Tail Emphasis & While still prioritizing all, slightly focus on both the youngest and oldest by age. & 1.00 & -0.02 ± 0.069 & 0.52 ± 0.111 & 0.64 ± 0.062 & 0.78 ± 0.044 \\
5 & Middle Income & While still prioritizing all, slightly prefer the income bracket bounds for the middle 40\% of the population. & 1.00 & -0.08 ± 0.067 & 0.65 ± 0.097 & 0.69 ± 0.072 & 0.66 ± 0.083 \\
6 & Non-Phone Owner & While still prioritizing all, slightly favor those women who do not own their own phone. & 1.00 & -0.07 ± 0.067 & 0.50 ± 0.114 & 0.92 ± 0.029 & 0.85 ± 0.038 \\
7 & Low Income + Younger Bias & While still prioritizing all, slightly prioritize impoverished younger mothers by combining the distributions of 'age' and 'education'. & 1.00 & 0.03 ± 0.075 & 0.40 ± 0.134 & 0.91 ± 0.021 & 0.96 ± 0.019 \\
8 & Marathi Speakers + Middle Aged & While still prioritizing all, slightly focus on those Marathi-speakers with middle-aged mothers. & 1.00 & -0.00 ± 0.074 & 0.51 ± 0.115 & 0.88 ± 0.033 & 0.91 ± 0.033 \\
9 & Early and  Late Workers & While still prioritizing all, slightly emphasize beneficiaries who likely work early in the morning and late at night. & 1.00 & -0.11 ± 0.065 & 0.69 ± 0.093 & 0.93 ± 0.035 & 0.94 ± 0.024 \\
10 & Critical Low Income & While still prioritizing all, slightly weight the lowest income\_bracket groups, the absolute lowest earners in the population. & 1.00 & -0.08 ± 0.068 & 0.60 ± 0.104 & 0.85 ± 0.041 & 0.83 ± 0.041 \\
11 & Early Morning Call + NGO Registered & While still prioritizing all, slightly advantage those who prefer being called before 10:30am 'slot' and are registered at an NGO. & 1.00 & -0.12 ± 0.065 & 0.70 ± 0.085 & 0.94 ± 0.027 & 0.94 ± 0.027 \\
12 & Morning Call + NGO Registered & While still prioritizing all, slightly advantage those who prefer being called between 10:30am-12:30pm and are registered at an NGO. & 1.00 & -0.07 ± 0.066 & 0.52 ± 0.109 & 0.93 ± 0.026 & 0.92 ± 0.031 \\
13 & Afternoon Call + NGO Registered & While still prioritizing all, slightly advantage those who prefer being called between 12:30pm-3:30pm and are registered at an NGO. & 1.00 & -0.11 ± 0.062 & 0.72 ± 0.085 & 0.76 ± 0.052 & 0.82 ± 0.041 \\
14 & Evening Call +  NGO Registered & While still prioritizing all, slightly advantage those who prefer being called after 7PM 'slot' registered at an NGO. & 1.00 & -0.06 ± 0.070 & 0.55 ± 0.111 & 0.83 ± 0.046 & 0.78 ± 0.045 \\
15 & Technically Challenged & While still prioritizing all, infer technical challenges in reaching the phone that could indicate 'at-risk' beneficiaries and give slight preference. & 1.00 & -0.14 ± 0.060 & 0.64 ± 0.097 & 0.80 ± 0.045 & 0.90 ± 0.037
\end{tabularx}
}
\end{table}

\section{Sample Reward Reflection Output}\label{app:reflectfull}

Here we provide the full output of a sample reward reflection procedure, in addition to the corresponding selected LLM candidate reward. 

\definecolor{codegreen}{HTML}{aae6ac}
\definecolor{backcolour}{rgb}{0.95,0.95,0.92}
\lstdefinestyle{mystyle}{
    backgroundcolor=\color{backcolour},   
    commentstyle=\color{codegreen},
    basicstyle=\ttfamily\scriptsize,
    breakatwhitespace=false,         
    breaklines=false,                 
    captionpos=b,                    
    keepspaces=true,                 
    showspaces=false,                
    showstringspaces=false,
    showtabs=false,                  
    tabsize=2
}
\lstset{style=mystyle}

\subsection{Task: Older Bias}

We add a full reflection output string for the Idx. 0: Older Bias task below. We make several observations on the proposed rewards and selected best candidate index. First, we note that the original base function uses only \texttt{agent\_feats[11]}, which is the binary feature for the oldest age group Age 51-60. We see that the first proposed reward (Index 0) proposes using the top three oldest age groups, \texttt{agent\_feats[9],agent\_feats[10],agent\_feats[11]}, while the second proposed reward (Index 1) uses only the top two oldest age groups \texttt{agent\_feats[9],agent\_feats[10]}. We additionally observe, in the state-feature distributions, that the Index 1 reward observes a higher percentage of positive state accumulated for the oldest age group Ages 51-60, with 31.65\% accumulated state vs 28.17\% for Index 0. Finally, the LLM selects reward at Index 1 as the top candidate reward, demonstrating that there is some reasonable similarity between the selection of the LLM and the intuitively preferred reward function. Additionally, in this case, the LLM reflection was able to help guide the selection of reward functions that prioritize the very oldest age groups, demonstrating some positive effect on feature extraction through reflection. 

\begin{lstlisting}[escapechar=!]  
My goal was to create a Python reward function for RL in resource allocation,
with the objective of: While still prioritizing all, slightly focus on the
oldest by age distribution. I tried several reward functions for this task.
Below, I have the given reward function, and the corresponding distribution of
reward achieved across 44 agent features. A description of the features is as
follows:
Index Name DataType
[sensitive features hidden]
7. Ages 10-20 - Binary
8. Ages 21-30 - Binary
9. Ages 31-40 - Binary
10. Ages 41-50 - Binary
11. Ages 51-60 - Binary
12. Speaks Hindi - Binary
13. Speaks Marathi - Binary
14. Speaks Gujurati - Binary
15. Speaks Kannada - Binary
16. Education level 1/7 -- illiterate - Binary
17. Education level 2/7 -- 1-5th Grade Completed - Binary
18. Education level 3/7 -- 6-9th Grade Completed - Binary
19. Education level 4/7 -- 10th Grade Passed - Binary
20. Education level 5/7 -- 12th Grade Passed - Binary
21. Education level 6/7 -- Graduate - Binary
22. Education level 7/7 -- Post graduate - Binary
23. Phone owner 0 (e.g., woman) - Binary
24. Phone owner 1 (e.g., husband) - Binary
25. Phone owner 2 (e.g., family) - Binary
26. To be called from 8:30am-10:30am - Binary
27. To be called from 10:30am-12:30pm - Binary
28. To be called from 12:30pm-3:30pm - Binary
29. To be called from 3:30pm-5:30pm - Binary
30. To be called from 5:30pm-7:30pm - Binary
31. To be called from 7:30pm-9:30pm - Binary
32. NGO - Binary
33. ARMMAN - Binary
34. PHC - Binary
35. Income bracket -1 (no income) - Binary
36. Income bracket 1 (e.g., 0-5000) - Binary
37. Income bracket 2 (e.g., 5001-10000) - Binary
38. Income bracket 3 (e.g., 10001-15000) - Binary
39. Income bracket 4 (e.g., 15001-20000) - Binary
40. Income bracket 5 (e.g., 20001-25000) - Binary
41. Income bracket 6 (e.g., 25001-30000) - Binary
42. Income bracket 7 (e.g., 30000-999999) - Binary


Below are the reward functions I used and their corresponding reward
distributions:

Index 0:
Reward Function: 3 * (state) + 4 * ( (state) * (agent_feats[9] or
agent_feats[10] or agent_feats[11]) )
Reflection:
'

[sensitive features hidden]

Category: Ages
Ages 10-20: 18.18%
Ages 21-30: 27.75%
Ages 31-40: 22.95%
Ages 41-50: 2.95%
Ages 51-60: 28.17%

Category: Income
Income bracket -1 (no income): 0.00%
Income bracket 1 (e.g., 0-5000): 9.13%
Income bracket 2 (e.g., 5001-10000): 55.81%
Income bracket 3 (e.g., 10001-15000): 23.35%
Income bracket 4 (e.g., 15001-20000): 5.85%
Income bracket 5 (e.g., 20001-25000): 5.87%
Income bracket 6 (e.g., 25001-30000): 0.00%
Income bracket 7 (e.g., 30000-999999): 0.00%

Category: Calling Times
8:30am-10:30am: 7.45%
10:30am-12:30pm: 4.68%
12:30pm-3:30pm: 53.17%
3:30pm-5:30pm: 13.74%
5:30pm-7:30pm: 9.48%
7:30pm-9:30pm: 11.48%

Category: Education Levels
Illiterate: 4.75%
1-5th Grade Completed: 6.22%
6-9th Grade Completed: 31.70%
10th Grade Passed: 20.22%
12th Grade Passed: 37.10%
Graduate: 0.00%
Post graduate: 0.00%

Category: Languages Spoken
Speaks Hindi: 35.23%
Speaks Marathi: 64.77%
Speaks Gujurati: 0.00%
Speaks Kannada: 0.00%

Category: Phone Owners
Phone owner - Woman: 91.23%
Phone owner - Husband: 5.72%
Phone owner - Family: 3.05%

Category: Organizations
NGO: 75.58%
ARMMAN: 24.42%
PHC: 0.00%'

Index 1:
Reward Function: state * 0.1 + 2 * state * (agent_feats[10] or agent_feats[11])
Reflection:
'

[sensitive features hidden]

Category: Ages
Ages 10-20: 17.42%
Ages 21-30: 26.24%
Ages 31-40: 21.78%
Ages 41-50: 2.90%
Ages 51-60: 31.65%

Category: Income
Income bracket -1 (no income): 0.00%
Income bracket 1 (e.g., 0-5000): 8.77%
Income bracket 2 (e.g., 5001-10000): 58.05%
Income bracket 3 (e.g., 10001-15000): 21.70%
Income bracket 4 (e.g., 15001-20000): 5.73%
Income bracket 5 (e.g., 20001-25000): 5.75%
Income bracket 6 (e.g., 25001-30000): 0.00%
Income bracket 7 (e.g., 30000-999999): 0.00%

Category: Calling Times
8:30am-10:30am: 7.32%
10:30am-12:30pm: 4.39%
12:30pm-3:30pm: 55.10%
3:30pm-5:30pm: 12.90%
5:30pm-7:30pm: 8.73%
7:30pm-9:30pm: 11.55%

Category: Education Levels
Illiterate: 4.56%
1-5th Grade Completed: 5.86%
6-9th Grade Completed: 30.28%
10th Grade Passed: 19.07%
12th Grade Passed: 40.24%
Graduate: 0.00%
Post graduate: 0.00%

Category: Languages Spoken
Speaks Hindi: 33.67%
Speaks Marathi: 66.33%
Speaks Gujurati: 0.00%
Speaks Kannada: 0.00%

Category: Phone Owners
Phone owner - Woman: 91.34%
Phone owner - Husband: 5.76%
Phone owner - Family: 2.89%

Category: Organizations
NGO: 76.50%
ARMMAN: 23.50%
PHC: 0.00%'


Based on the above reward distributions and the given goal: While still
prioritizing all, slightly focus on the oldest by age distribution., please
identify the index of the most effective reward function. Provide your answer
EXACTLY IN the following format: 'The best reward function is at index:
[INDEX]'.

!\colorbox{codegreen}{The best reward function is at index: 1}!

\end{lstlisting}

\subsection{Marathi Speakers + Middle Aged}

We next analyze a sample reward reflection for the Idx. 8: Marathi Speakers + Middle aged task below. Observe that the base reward function in \autoref{tab:tasks_list} uses the features \texttt{agent\_feats[13]}, the binary Marathi-speaking feature, and \texttt{agent\_feats[9],agent\_feats[10]}, the middle ages in the age range Ages 31-40 and Ages 41-50. Below, we observe that the first proposed reward includes these exact features, while the second proposed reward only includes one of the middle-income features \texttt{agent\_feats[9]}. However, there is a key difference in the feature \textit{weightings} applied in both reward functions, where the first proposed reward function weights the prioritized features much more with scalar multipliers than the first proposed function. In this case, the LLM selects the first proposed reward function, which also happens to be closer to the scaling used in the original Base reward function. While this is not known to the LLM, the first proposed reward function also results in a greater accumulate state percentage in the relevant middle-age feature groups, a possible reason for the LLM selecting the first reward, thus demonstrating an example of reflection aiding in tuning feature weighting in reward functions. 

\begin{lstlisting}[escapechar=!]
My goal was to create a Python reward function for RL in resource allocation,
with the objective of: While still prioritizing all, slightly focus on those
Marathi-speakers with middle-aged mothers. I tried several reward functions for
this task. Below, I have the given reward function, and the corresponding
distribution of reward achieved across 44 agent features. A description of the
features is as follows:
Index Name DataType
[sensitive features hidden]
7. Ages 10-20 - Binary
8. Ages 21-30 - Binary
9. Ages 31-40 - Binary
10. Ages 41-50 - Binary
11. Ages 51-60 - Binary
12. Speaks Hindi - Binary
13. Speaks Marathi - Binary
14. Speaks Gujurati - Binary
15. Speaks Kannada - Binary
16. Education level 1/7 -- illiterate - Binary
17. Education level 2/7 -- 1-5th Grade Completed - Binary
18. Education level 3/7 -- 6-9th Grade Completed - Binary
19. Education level 4/7 -- 10th Grade Passed - Binary
20. Education level 5/7 -- 12th Grade Passed - Binary
21. Education level 6/7 -- Graduate - Binary
22. Education level 7/7 -- Post graduate - Binary
23. Phone owner 0 (e.g., woman) - Binary
24. Phone owner 1 (e.g., husband) - Binary
25. Phone owner 2 (e.g., family) - Binary
26. To be called from 8:30am-10:30am - Binary
27. To be called from 10:30am-12:30pm - Binary
28. To be called from 12:30pm-3:30pm - Binary
29. To be called from 3:30pm-5:30pm - Binary
30. To be called from 5:30pm-7:30pm - Binary
31. To be called from 7:30pm-9:30pm - Binary
32. NGO - Binary
33. ARMMAN - Binary
34. PHC - Binary
35. Income bracket -1 (no income) - Binary
36. Income bracket 1 (e.g., 0-5000) - Binary
37. Income bracket 2 (e.g., 5001-10000) - Binary
38. Income bracket 3 (e.g., 10001-15000) - Binary
39. Income bracket 4 (e.g., 15001-20000) - Binary
40. Income bracket 5 (e.g., 20001-25000) - Binary
41. Income bracket 6 (e.g., 25001-30000) - Binary
42. Income bracket 7 (e.g., 30000-999999) - Binary


Below are the reward functions I used and their corresponding reward
distributions:

Index 0:
Reward Function: state * 0.1 + 3.5 * state * ((agent_feats[9] or
agent_feats[10]) and agent_feats[13])
Reflection:
'
[sensitive features hidden]

Category: Ages
Ages 10-20: 4.02%
Ages 21-30: 12.04%
Ages 31-40: 82.79%
Ages 41-50: 1.16%
Ages 51-60: 0.00%

Category: Income
Income bracket -1 (no income): 0.00%
Income bracket 1 (e.g., 0-5000): 2.49%
Income bracket 2 (e.g., 5001-10000): 62.28%
Income bracket 3 (e.g., 10001-15000): 32.25%
Income bracket 4 (e.g., 15001-20000): 2.43%
Income bracket 5 (e.g., 20001-25000): 0.56%
Income bracket 6 (e.g., 25001-30000): 0.00%
Income bracket 7 (e.g., 30000-999999): 0.00%

Category: Calling Times
8:30am-10:30am: 5.45%
10:30am-12:30pm: 13.50%
12:30pm-3:30pm: 32.12%
3:30pm-5:30pm: 17.42%
5:30pm-7:30pm: 1.74%
7:30pm-9:30pm: 29.77%

Category: Education Levels
Illiterate: 1.13%
1-5th Grade Completed: 18.09%
6-9th Grade Completed: 32.36%
10th Grade Passed: 30.49%
12th Grade Passed: 14.38%
Graduate: 0.59%
Post graduate: 2.96%

Category: Languages Spoken
Speaks Hindi: 16.86%
Speaks Marathi: 83.14%
Speaks Gujurati: 0.00%
Speaks Kannada: 0.00%

Category: Phone Owners
Phone owner - Woman: 95.22%
Phone owner - Husband: 3.02%
Phone owner - Family: 1.76%

Category: Organizations
NGO: 92.87%
ARMMAN: 7.13%
PHC: 0.00%'

Index 1:
Reward Function: 2 * state + 2 * (state and (agent_feats[9] and
agent_feats[13]))
Reflection:
'
[sensitive features hidden]

Category: Ages
Ages 10-20: 4.33%
Ages 21-30: 12.53%
Ages 31-40: 81.90%
Ages 41-50: 1.23%
Ages 51-60: 0.00%

Category: Income
Income bracket -1 (no income): 0.00%
Income bracket 1 (e.g., 0-5000): 2.48%
Income bracket 2 (e.g., 5001-10000): 63.04%
Income bracket 3 (e.g., 10001-15000): 31.35%
Income bracket 4 (e.g., 15001-20000): 2.47%
Income bracket 5 (e.g., 20001-25000): 0.66%
Income bracket 6 (e.g., 25001-30000): 0.00%
Income bracket 7 (e.g., 30000-999999): 0.00%

Category: Calling Times
8:30am-10:30am: 5.58%
10:30am-12:30pm: 12.93%
12:30pm-3:30pm: 34.84%
3:30pm-5:30pm: 17.14%
5:30pm-7:30pm: 1.81%
7:30pm-9:30pm: 27.70%

Category: Education Levels
Illiterate: 1.27%
1-5th Grade Completed: 17.84%
6-9th Grade Completed: 34.92%
10th Grade Passed: 29.50%
12th Grade Passed: 12.58%
Graduate: 0.55%
Post graduate: 3.35%

Category: Languages Spoken
Speaks Hindi: 18.26%
Speaks Marathi: 81.74%
Speaks Gujurati: 0.00%
Speaks Kannada: 0.00%

Category: Phone Owners
Phone owner - Woman: 94.85%
Phone owner - Husband: 3.23%
Phone owner - Family: 1.91%

Category: Organizations
NGO: 92.32%
ARMMAN: 7.68%
PHC: 0.00%'


Based on the above reward distributions and the given goal: While still
prioritizing all, slightly focus on those Marathi-speakers with middle-aged
mothers., please identify the index of the most effective reward function.
Provide your answer EXACTLY IN the following format: 'The best reward function
is at index: [INDEX]'.

!\colorbox{codegreen}{The best reward function is at index: 0}!
\end{lstlisting}

\subsection{Technically Challenged}

In the final example, we compare a sample of reward reflection for the challenging, ambiguous "Technically Challenged" prompt. The ground truth Base reward for this task considers only the features of phone ownership, interpreting "Technically Challenged" as those women who do not own their own phone (using features \texttt{agent\_feats[24], agent\_feats[25]}. In this case, both proposed reward functions have low precision, including auxillary features that are not directly relevant to the Base reward. We observe, in this case, that although the second proposed reward, at Index 1, has utilized only one of the relevant features \texttt{agent\_feats[25]}, it is ultimately selected in the reflection stage. For this highly ambiguous tasks, additional external input may be required, as we observe in this case that reflection does not align with what we may desire given the known Base reward function.

\begin{lstlisting}[escapechar=!]
My goal was to create a Python reward function for RL in resource allocation,
with the objective of: While still prioritizing all, infer technical challenges
in reaching the phone that could indicate 'at-risk' beneficiaries and give
slight preference. I tried several reward functions for this task. Below, I have
the given reward function, and the corresponding distribution of reward achieved
across 44 agent features. A description of the features is as follows:
Index Name DataType
[sensitive features hidden]
7. Ages 10-20 - Binary
8. Ages 21-30 - Binary
9. Ages 31-40 - Binary
10. Ages 41-50 - Binary
11. Ages 51-60 - Binary
12. Speaks Hindi - Binary
13. Speaks Marathi - Binary
14. Speaks Gujurati - Binary
15. Speaks Kannada - Binary
16. Education level 1/7 -- illiterate - Binary
17. Education level 2/7 -- 1-5th Grade Completed - Binary
18. Education level 3/7 -- 6-9th Grade Completed - Binary
19. Education level 4/7 -- 10th Grade Passed - Binary
20. Education level 5/7 -- 12th Grade Passed - Binary
21. Education level 6/7 -- Graduate - Binary
22. Education level 7/7 -- Post graduate - Binary
23. Phone owner 0 (e.g., woman) - Binary
24. Phone owner 1 (e.g., husband) - Binary
25. Phone owner 2 (e.g., family) - Binary
26. To be called from 8:30am-10:30am - Binary
27. To be called from 10:30am-12:30pm - Binary
28. To be called from 12:30pm-3:30pm - Binary
29. To be called from 3:30pm-5:30pm - Binary
30. To be called from 5:30pm-7:30pm - Binary
31. To be called from 7:30pm-9:30pm - Binary
32. NGO - Binary
33. ARMMAN - Binary
34. PHC - Binary
35. Income bracket -1 (no income) - Binary
36. Income bracket 1 (e.g., 0-5000) - Binary
37. Income bracket 2 (e.g., 5001-10000) - Binary
38. Income bracket 3 (e.g., 10001-15000) - Binary
39. Income bracket 4 (e.g., 15001-20000) - Binary
40. Income bracket 5 (e.g., 20001-25000) - Binary
41. Income bracket 6 (e.g., 25001-30000) - Binary
42. Income bracket 7 (e.g., 30000-999999) - Binary


Below are the reward functions I used and their corresponding reward
distributions:

Index 0:
Reward Function: state * 0.1 + 2 * state * ((agent_feats[9] or agent_feats[10])
and (agent_feats[11] or agent_feats[24] or agent_feats[25]))
Reflection:
'

[sensitive features hidden]

Category: Ages
Ages 10-20: 5.38%
Ages 21-30: 84.35%
Ages 31-40: 6.31%
Ages 41-50: 3.96%
Ages 51-60: 0.00%

Category: Income
Income bracket -1 (no income): 0.00%
Income bracket 1 (e.g., 0-5000): 6.98%
Income bracket 2 (e.g., 5001-10000): 44.30%
Income bracket 3 (e.g., 10001-15000): 43.07%
Income bracket 4 (e.g., 15001-20000): 4.83%
Income bracket 5 (e.g., 20001-25000): 0.82%
Income bracket 6 (e.g., 25001-30000): 0.00%
Income bracket 7 (e.g., 30000-999999): 0.00%

Category: Calling Times
8:30am-10:30am: 10.20%
10:30am-12:30pm: 3.16%
12:30pm-3:30pm: 9.29%
3:30pm-5:30pm: 19.40%
5:30pm-7:30pm: 21.35%
7:30pm-9:30pm: 36.59%

Category: Education Levels
Illiterate: 2.42%
1-5th Grade Completed: 20.64%
6-9th Grade Completed: 27.74%
10th Grade Passed: 24.80%
12th Grade Passed: 4.62%
Graduate: 1.66%
Post graduate: 18.13%

Category: Languages Spoken
Speaks Hindi: 37.45%
Speaks Marathi: 62.55%
Speaks Gujurati: 0.00%
Speaks Kannada: 0.00%

Category: Phone Owners
Phone owner - Woman: 34.69%
Phone owner - Husband: 31.75%
Phone owner - Family: 33.56%

Category: Organizations
NGO: 43.54%
ARMMAN: 56.46%
PHC: 0.00%'

Index 1:
Reward Function: (5 * agent_feats[7] + 4 * agent_feats[8] + 3 * agent_feats[14]
+ 2 * agent_feats[25] + 1) * state
Reflection:
'

[sensitive features hidden]

Category: Ages
Ages 10-20: 5.21%
Ages 21-30: 84.78%
Ages 31-40: 6.00%
Ages 41-50: 4.01%
Ages 51-60: 0.00%

Category: Income
Income bracket -1 (no income): 0.00%
Income bracket 1 (e.g., 0-5000): 6.88%
Income bracket 2 (e.g., 5001-10000): 44.74%
Income bracket 3 (e.g., 10001-15000): 42.90%
Income bracket 4 (e.g., 15001-20000): 4.65%
Income bracket 5 (e.g., 20001-25000): 0.83%
Income bracket 6 (e.g., 25001-30000): 0.00%
Income bracket 7 (e.g., 30000-999999): 0.00%

Category: Calling Times
8:30am-10:30am: 10.03%
10:30am-12:30pm: 2.98%
12:30pm-3:30pm: 9.01%
3:30pm-5:30pm: 19.40%
5:30pm-7:30pm: 21.43%
7:30pm-9:30pm: 37.16%

Category: Education Levels
Illiterate: 2.30%
1-5th Grade Completed: 21.66%
6-9th Grade Completed: 27.61%
10th Grade Passed: 24.72%
12th Grade Passed: 4.53%
Graduate: 1.53%
Post graduate: 17.67%

Category: Languages Spoken
Speaks Hindi: 36.94%
Speaks Marathi: 63.06%
Speaks Gujurati: 0.00%
Speaks Kannada: 0.00%

Category: Phone Owners
Phone owner - Woman: 33.76%
Phone owner - Husband: 32.50%
Phone owner - Family: 33.74%

Category: Organizations
NGO: 43.09%
ARMMAN: 56.91%
PHC: 0.00%'


Based on the above reward distributions and the given goal: While still
prioritizing all, infer technical challenges in reaching the phone that could
indicate 'at-risk' beneficiaries and give slight preference., please identify
the index of the most effective reward function. Provide your answer EXACTLY IN
the following format: 'The best reward function is at index: [INDEX]'.

!\colorbox{codegreen}{The best reward function is at index: 1}!
\end{lstlisting}


\newpage

\section*{NeurIPS Paper Checklist}

\begin{enumerate}

\item {\bf Claims}
    \item[] Question: Do the main claims made in the abstract and introduction accurately reflect the paper's contributions and scope?
    \item[] Answer: \answerYes{} 
    \item[] Justification: The abstract and introduction clearly state and claims, and the claimed made in the abstract and introduction match the theoretical and experimental results. 
    \item[] Guidelines:
    \begin{itemize}
        \item The answer NA means that the abstract and introduction do not include the claims made in the paper.
        \item The abstract and/or introduction should clearly state the claims made, including the contributions made in the paper and important assumptions and limitations. A No or NA answer to this question will not be perceived well by the reviewers. 
        \item The claims made should match theoretical and experimental results, and reflect how much the results can be expected to generalize to other settings. 
        \item It is fine to include aspirational goals as motivation as long as it is clear that these goals are not attained by the paper. 
    \end{itemize}

\item {\bf Limitations}
    \item[] Question: Does the paper discuss the limitations of the work performed by the authors?
    \item[] Answer: \answerYes{} 
    \item[] Justification: The scope of the claims are clearly stated (see abstract, introduction, and experiments). Our approach is designed for restless multi-arm bandits tasks, and the experiments are focused on public health domains. 
    \item[] Guidelines:
    \begin{itemize}
        \item The answer NA means that the paper has no limitation while the answer No means that the paper has limitations, but those are not discussed in the paper. 
        \item The authors are encouraged to create a separate "Limitations" section in their paper.
        \item The paper should point out any strong assumptions and how robust the results are to violations of these assumptions (e.g., independence assumptions, noiseless settings, model well-specification, asymptotic approximations only holding locally). The authors should reflect on how these assumptions might be violated in practice and what the implications would be.
        \item The authors should reflect on the scope of the claims made, e.g., if the approach was only tested on a few datasets or with a few runs. In general, empirical results often depend on implicit assumptions, which should be articulated.
        \item The authors should reflect on the factors that influence the performance of the approach. For example, a facial recognition algorithm may perform poorly when image resolution is low or images are taken in low lighting. Or a speech-to-text system might not be used reliably to provide closed captions for online lectures because it fails to handle technical jargon.
        \item The authors should discuss the computational efficiency of the proposed algorithms and how they scale with dataset size.
        \item If applicable, the authors should discuss possible limitations of their approach to address problems of privacy and fairness.
        \item While the authors might fear that complete honesty about limitations might be used by reviewers as grounds for rejection, a worse outcome might be that reviewers discover limitations that aren't acknowledged in the paper. The authors should use their best judgment and recognize that individual actions in favor of transparency play an important role in developing norms that preserve the integrity of the community. Reviewers will be specifically instructed to not penalize honesty concerning limitations.
    \end{itemize}

\item {\bf Theory Assumptions and Proofs}
    \item[] Question: For each theoretical result, does the paper provide the full set of assumptions and a complete (and correct) proof?
    \item[] Answer: \answerYes{} 
    \item[] Justification: We provide all details on theoretical results, including necessary assumptions and detailed derivations.
    \item[] Guidelines:
    \begin{itemize}
        \item The answer NA means that the paper does not include theoretical results. 
        \item All the theorems, formulas, and proofs in the paper should be numbered and cross-referenced.
        \item All assumptions should be clearly stated or referenced in the statement of any theorems.
        \item The proofs can either appear in the main paper or the supplemental material, but if they appear in the supplemental material, the authors are encouraged to provide a short proof sketch to provide intuition. 
        \item Inversely, any informal proof provided in the core of the paper should be complemented by formal proofs provided in appendix or supplemental material.
        \item Theorems and Lemmas that the proof relies upon should be properly referenced. 
    \end{itemize}

    \item {\bf Experimental Result Reproducibility}
    \item[] Question: Does the paper fully disclose all the information needed to reproduce the main experimental results of the paper to the extent that it affects the main claims and/or conclusions of the paper (regardless of whether the code and data are provided or not)?
    \item[] Answer: \answerYes{} 
    \item[] Justification: We disclose all information needed to reproduce the experimental results, including training details (Section~\ref{subsec:training_details}), 
    a full task list and ground truth base reward function (see Appendix E), full output of sample reward reflection procedure (see Appendix F).  
    \item[] Guidelines:
    \begin{itemize}
        \item The answer NA means that the paper does not include experiments.
        \item If the paper includes experiments, a No answer to this question will not be perceived well by the reviewers: Making the paper reproducible is important, regardless of whether the code and data are provided or not.
        \item If the contribution is a dataset and/or model, the authors should describe the steps taken to make their results reproducible or verifiable. 
        \item Depending on the contribution, reproducibility can be accomplished in various ways. For example, if the contribution is a novel architecture, describing the architecture fully might suffice, or if the contribution is a specific model and empirical evaluation, it may be necessary to either make it possible for others to replicate the model with the same dataset, or provide access to the model. In general. releasing code and data is often one good way to accomplish this, but reproducibility can also be provided via detailed instructions for how to replicate the results, access to a hosted model (e.g., in the case of a large language model), releasing of a model checkpoint, or other means that are appropriate to the research performed.
        \item While NeurIPS does not require releasing code, the conference does require all submissions to provide some reasonable avenue for reproducibility, which may depend on the nature of the contribution. For example
        \begin{enumerate}
            \item If the contribution is primarily a new algorithm, the paper should make it clear how to reproduce that algorithm.
            \item If the contribution is primarily a new model architecture, the paper should describe the architecture clearly and fully.
            \item If the contribution is a new model (e.g., a large language model), then there should either be a way to access this model for reproducing the results or a way to reproduce the model (e.g., with an open-source dataset or instructions for how to construct the dataset).
            \item We recognize that reproducibility may be tricky in some cases, in which case authors are welcome to describe the particular way they provide for reproducibility. In the case of closed-source models, it may be that access to the model is limited in some way (e.g., to registered users), but it should be possible for other researchers to have some path to reproducing or verifying the results.
        \end{enumerate}
    \end{itemize}

\item {\bf Open access to data and code}
    \item[] Question: Does the paper provide open access to the data and code, with sufficient instructions to faithfully reproduce the main experimental results, as described in supplemental material?
    \item[] Answer: \answerNo{} 
    \item[] Justification: We use a real world dataset collected by a non-profit organization in public health domains. We are not allowed to release the data. However, we will release the code, and we describe dataset details (e.g. dataset size, features available) and dataset consent for collection and analysis in Appendix~\ref{sec:description_of_dataset} and ~\ref{sec:consent_for_dataset}. 
    \item[] Guidelines:
    \begin{itemize}
        \item The answer NA means that paper does not include experiments requiring code.
        \item Please see the NeurIPS code and data submission guidelines (\url{https://nips.cc/public/guides/CodeSubmissionPolicy}) for more details.
        \item While we encourage the release of code and data, we understand that this might not be possible, so “No” is an acceptable answer. Papers cannot be rejected simply for not including code, unless this is central to the contribution (e.g., for a new open-source benchmark).
        \item The instructions should contain the exact command and environment needed to run to reproduce the results. See the NeurIPS code and data submission guidelines (\url{https://nips.cc/public/guides/CodeSubmissionPolicy}) for more details.
        \item The authors should provide instructions on data access and preparation, including how to access the raw data, preprocessed data, intermediate data, and generated data, etc.
        \item The authors should provide scripts to reproduce all experimental results for the new proposed method and baselines. If only a subset of experiments are reproducible, they should state which ones are omitted from the script and why.
        \item At submission time, to preserve anonymity, the authors should release anonymized versions (if applicable).
        \item Providing as much information as possible in supplemental material (appended to the paper) is recommended, but including URLs to data and code is permitted.
    \end{itemize}

\item {\bf Experimental Setting/Details}
    \item[] Question: Does the paper specify all the training and test details (e.g., data splits, hyperparameters, how they were chosen, type of optimizer, etc.) necessary to understand the results?
    \item[] Answer: \answerYes{} 
    \item[] Justification: In Section~\ref{sec:experimental_evaluation} on experiments, we provide detailed of the experimental setting, including details of the simulated public health setting, tasks, baselines, metrics, training details, and evaluation details.  
    \item[] Guidelines:
    \begin{itemize}
        \item The answer NA means that the paper does not include experiments.
        \item The experimental setting should be presented in the core of the paper to a level of detail that is necessary to appreciate the results and make sense of them.
        \item The full details can be provided either with the code, in appendix, or as supplemental material.
    \end{itemize}

\item {\bf Experiment Statistical Significance}
    \item[] Question: Does the paper report error bars suitably and correctly defined or other appropriate information about the statistical significance of the experiments?
    \item[] Answer: \answerYes{} 
    \item[] Justification: We report error bars in the main experimental results (see Figure~\ref{fig:results_main}). We report how these errors bars are calculated (number of trails, number of samples per steps, etc) in Figure~\ref{fig:results_main} caption. We also report results on statistical significance, specifically one-tailed t-test results, in Table~\ref{tab:result_statistical_tests} in Appendix. 
    \item[] Guidelines:
    \begin{itemize}
        \item The answer NA means that the paper does not include experiments.
        \item The authors should answer "Yes" if the results are accompanied by error bars, confidence intervals, or statistical significance tests, at least for the experiments that support the main claims of the paper.
        \item The factors of variability that the error bars are capturing should be clearly stated (for example, train/test split, initialization, random drawing of some parameter, or overall run with given experimental conditions).
        \item The method for calculating the error bars should be explained (closed form formula, call to a library function, bootstrap, etc.)
        \item The assumptions made should be given (e.g., Normally distributed errors).
        \item It should be clear whether the error bar is the standard deviation or the standard error of the mean.
        \item It is OK to report 1-sigma error bars, but one should state it. The authors should preferably report a 2-sigma error bar than state that they have a 96\% CI, if the hypothesis of Normality of errors is not verified.
        \item For asymmetric distributions, the authors should be careful not to show in tables or figures symmetric error bars that would yield results that are out of range (e.g. negative error rates).
        \item If error bars are reported in tables or plots, The authors should explain in the text how they were calculated and reference the corresponding figures or tables in the text.
    \end{itemize}

\item {\bf Experiments Compute Resources}
    \item[] Question: For each experiment, does the paper provide sufficient information on the computer resources (type of compute workers, memory, time of execution) needed to reproduce the experiments?
    \item[] Answer: \answerYes{} 
    \item[] Justification:  We report information on computing resources in Appendix Section~\ref{sec:compute_resources}
    \item[] Guidelines:
    \begin{itemize}
        \item The answer NA means that the paper does not include experiments.
        \item The paper should indicate the type of compute workers CPU or GPU, internal cluster, or cloud provider, including relevant memory and storage.
        \item The paper should provide the amount of compute required for each of the individual experimental runs as well as estimate the total compute. 
        \item The paper should disclose whether the full research project required more compute than the experiments reported in the paper (e.g., preliminary or failed experiments that didn't make it into the paper). 
    \end{itemize}
    
\item {\bf Code Of Ethics}
    \item[] Question: Does the research conducted in the paper conform, in every respect, with the NeurIPS Code of Ethics \url{https://neurips.cc/public/EthicsGuidelines}?
    \item[] Answer: \answerYes{} 
    \item[] Justification: Our experiments confirm with NeurIPS code of ethics. For experiments on real-world data, we describe dataset consent for collection and analysis in  Appendix~\ref{sec:consent_for_dataset}.  
    \item[] Guidelines:
    \begin{itemize}
        \item The answer NA means that the authors have not reviewed the NeurIPS Code of Ethics.
        \item If the authors answer No, they should explain the special circumstances that require a deviation from the Code of Ethics.
        \item The authors should make sure to preserve anonymity (e.g., if there is a special consideration due to laws or regulations in their jurisdiction).
    \end{itemize}

\item {\bf Broader Impacts}
    \item[] Question: Does the paper discuss both potential positive societal impacts and negative societal impacts of the work performed?
    \item[] Answer: \answerYes{} 
    \item[] Justification: We discuss  societal impacts of this work in the social impact statement (see Section~\ref{sec:social_impact_statement} in Appendix).
    \item[] Guidelines:
    \begin{itemize}
        \item The answer NA means that there is no societal impact of the work performed.
        \item If the authors answer NA or No, they should explain why their work has no societal impact or why the paper does not address societal impact.
        \item Examples of negative societal impacts include potential malicious or unintended uses (e.g., disinformation, generating fake profiles, surveillance), fairness considerations (e.g., deployment of technologies that could make decisions that unfairly impact specific groups), privacy considerations, and security considerations.
        \item The conference expects that many papers will be foundational research and not tied to particular applications, let alone deployments. However, if there is a direct path to any negative applications, the authors should point it out. For example, it is legitimate to point out that an improvement in the quality of generative models could be used to generate deepfakes for disinformation. On the other hand, it is not needed to point out that a generic algorithm for optimizing neural networks could enable people to train models that generate Deepfakes faster.
        \item The authors should consider possible harms that could arise when the technology is being used as intended and functioning correctly, harms that could arise when the technology is being used as intended but gives incorrect results, and harms following from (intentional or unintentional) misuse of the technology.
        \item If there are negative societal impacts, the authors could also discuss possible mitigation strategies (e.g., gated release of models, providing defenses in addition to attacks, mechanisms for monitoring misuse, mechanisms to monitor how a system learns from feedback over time, improving the efficiency and accessibility of ML).
    \end{itemize}
    
\item {\bf Safeguards}
    \item[] Question: Does the paper describe safeguards that have been put in place for responsible release of data or models that have a high risk for misuse (e.g., pretrained language models, image generators, or scraped datasets)?
    \item[] Answer: \answerYes{} 
    \item[] Justification: We discuss safeguards in place for responsible and safe use of data and models in Section~\ref{sec:experimental_evaluation}. The public health information dataset is completely anonymized by the non-profit organization we collaborate with (see Section~\ref{sec:consent_for_dataset} for more details).
    \item[] Guidelines:
    \begin{itemize}
        \item The answer NA means that the paper poses no such risks.
        \item Released models that have a high risk for misuse or dual-use should be released with necessary safeguards to allow for controlled use of the model, for example by requiring that users adhere to usage guidelines or restrictions to access the model or implementing safety filters. 
        \item Datasets that have been scraped from the Internet could pose safety risks. The authors should describe how they avoided releasing unsafe images.
        \item We recognize that providing effective safeguards is challenging, and many papers do not require this, but we encourage authors to take this into account and make a best faith effort.
    \end{itemize}

\item {\bf Licenses for existing assets}
    \item[] Question: Are the creators or original owners of assets (e.g., code, data, models), used in the paper, properly credited and are the license and terms of use explicitly mentioned and properly respected?
    \item[] Answer: \answerYes{} 
    \item[] Justification: We cite the original papers that produce the code and models used. Specifically, in Section~\ref{subsec:training_details}, we clearly state that we use the Gemini Pro LLM model \cite{team2023gemini} from Google to generate reward functions, and train our downstream policy with RL using PPO \cite{schulman2017proximal}. 
    \item[] Guidelines:
    \begin{itemize}
        \item The answer NA means that the paper does not use existing assets.
        \item The authors should cite the original paper that produced the code package or dataset.
        \item The authors should state which version of the asset is used and, if possible, include a URL.
        \item The name of the license (e.g., CC-BY 4.0) should be included for each asset.
        \item For scraped data from a particular source (e.g., website), the copyright and terms of service of that source should be provided.
        \item If assets are released, the license, copyright information, and terms of use in the package should be provided. For popular datasets, \url{paperswithcode.com/datasets} has curated licenses for some datasets. Their licensing guide can help determine the license of a dataset.
        \item For existing datasets that are re-packaged, both the original license and the license of the derived asset (if it has changed) should be provided.
        \item If this information is not available online, the authors are encouraged to reach out to the asset's creators.
    \end{itemize}

\item {\bf New Assets}
    \item[] Question: Are new assets introduced in the paper well documented and is the documentation provided alongside the assets?
    \item[] Answer: \answerNA{} 
    \item[] Justification: This paper does not release new assets. 
    \item[] Guidelines:
    \begin{itemize}
        \item The answer NA means that the paper does not release new assets.
        \item Researchers should communicate the details of the dataset/code/model as part of their submissions via structured templates. This includes details about training, license, limitations, etc. 
        \item The paper should discuss whether and how consent was obtained from people whose asset is used.
        \item At submission time, remember to anonymize your assets (if applicable). You can either create an anonymized URL or include an anonymized zip file.
    \end{itemize}

\item {\bf Crowdsourcing and Research with Human Subjects}
    \item[] Question: For crowdsourcing experiments and research with human subjects, does the paper include the full text of instructions given to participants and screenshots, if applicable, as well as details about compensation (if any)? 
    \item[] Answer: \answerNA{} 
    \item[] Justification: This paper does not involve crowdsourcing nor research with human subjects. We conduct a secondary analysis of the ARMMAN dataset (see Appendix~\ref{sec:consent_for_dataset} for details). 
    \item[] Guidelines:
    \begin{itemize}
        \item The answer NA means that the paper does not involve crowdsourcing nor research with human subjects.
        \item Including this information in the supplemental material is fine, but if the main contribution of the paper involves human subjects, then as much detail as possible should be included in the main paper. 
        \item According to the NeurIPS Code of Ethics, workers involved in data collection, curation, or other labor should be paid at least the minimum wage in the country of the data collector. 
    \end{itemize}

\item {\bf Institutional Review Board (IRB) Approvals or Equivalent for Research with Human Subjects}
    \item[] Question: Does the paper describe potential risks incurred by study participants, whether such risks were disclosed to the subjects, and whether Institutional Review Board (IRB) approvals (or an equivalent approval/review based on the requirements of your country or institution) were obtained?
    \item[] Answer: \answerYes{} 
    \item[] Justification:   The data exchange from ARMMAN to the researchers was regulated through clearly defined exchange protocols including anonymization, read-only researcher access, restricted use of data for research purposes only, and approval by the ARMMAN ethics review committee (see Appendix~\ref{sec:consent_for_dataset} for details).  
    \item[] Guidelines:
    \begin{itemize}
        \item The answer NA means that the paper does not involve crowdsourcing nor research with human subjects.
        \item Depending on the country in which research is conducted, IRB approval (or equivalent) may be required for any human subjects research. If you obtained IRB approval, you should clearly state this in the paper. 
        \item We recognize that the procedures for this may vary significantly between institutions and locations, and we expect authors to adhere to the NeurIPS Code of Ethics and the guidelines for their institution. 
        \item For initial submissions, do not include any information that would break anonymity (if applicable), such as the institution conducting the review.
    \end{itemize}

\end{enumerate}

\end{document}